\newtheorem{theorem}{Theorem}
\newtheorem{definition}{Definition}[section]
\begin{document}

\title{Hamiltonian Cycles on Ammann-Beenker Tilings}

\author{Shobhna Singh} 
\email{shobhna862@gmail.com}
  \affiliation{School of Physics and Astronomy, Cardiff University, The Parade, Cardiff CF24 3AA, United Kingdom}
\author{Jerome Lloyd}
  \affiliation{Department of Theoretical Physics, University of Geneva, 24 rue du G{\'e}n{\'e}ral-Dufour, 1211 Gen{\`e}ve 4, Switzerland}
\author{Felix Flicker}
  \affiliation{School of Physics, Tyndall Avenue, Bristol, BS8 1TL, United Kingdom}
  
\begin{abstract}
    We provide a simple algorithm for constructing Hamiltonian graph cycles (visiting every vertex exactly once) on a set of arbitrarily large finite subgraphs of aperiodic two-dimensional Ammann-Beenker (AB) tilings. Using this result, and the discrete scale symmetry of AB tilings, we find exact solutions to a range of other problems which lie in the complexity class NP-complete for general graphs. These include the equal-weight traveling salesperson problem, providing, for example the most efficient route a scanning tunneling microscope tip could take to image the atoms of physical quasicrystals with AB symmetries; the longest path problem, whose solution demonstrates that collections of flexible molecules of any length can adsorb onto AB quasicrystal surfaces at density one, with possible applications to catalysis; and the three-coloring problem, giving ground states for the $q$-state Potts model ($q\ge 3$) of magnetic interactions defined on the planar dual to AB, which may provide useful models for protein folding.
\end{abstract}

\maketitle 

%
\section{Introduction}
\label{sec:Intro}
%

Fractal geometric loop structures are ubiquitous in physics, appearing often as the natural degrees of freedom in models of critical phenomena. Examples include the fluctuating domain walls of an Ising magnet, the compact shapes of long polymer chains~\cite{chan1989compact}, or the space-time trajectories of quantum particles. A special class of loops, attracting much attention beyond physics, are the Hamiltonian cycles. A Hamiltonian cycle of a graph is a closed, self-avoiding loop that visits every vertex precisely once. Study of these objects dates at least to the ninth century A. D., when the Indian poet Rudra\d{t}a constructed a poem based on a `Knight's tour' of the chessboard. Since then they have appeared in a variety of applications in the sciences and mathematics including protein folding~\cite{flory1956statistical,Li1996}, traffic models, spin models in statistical mechanics~\cite{stanley1968dependence,blote1994fully}, and ice-type models of geometrically frustrated magnetism~\cite{baxter,jaubert2011analysis}.

Hamiltonian cycles have been used to model the statistics of polymer melts~\cite{cloizeaux1991polymers,chan1989compact,bodroza2013enumeration,jannink1990polymers}: For example, the critical scaling exponents were calculated for polymer chains in 2D using the self-avoiding walk on the honeycomb lattice~\cite{nienhuis1982exact,jacobsen2004conformal}. They have also played a key role in the study of protein folding~\cite{flory1956statistical,creighton1990protein,DillEA95,chan1989intrachain,chan1989compact,lau1989lattice}. The shapes (``conformational properties") of proteins play a significant role in their biological function. Proteins often adopt remarkably compact and symmetrical structures compared to the more general class of polymers~\cite{DillEA95}. A key question is how to predict 3D conformational properties from the 1D sequences of amino acids from which the proteins are built. One route is to study``simple exact models" in which the amino acids are represented by structureless units, each occupying the vertex of a graph. Neighboring units in the protein must be nearest neighbors on the graph. The graphs are chosen to have specific geometrical embeddings, typically periodic lattices. While not providing accurate geometrical models, such approaches have the advantage of being exactly solvable. They are also able to capture `nonlocal' interactions, in which amino acids interact when they are neighbors on the graph but not on the 1D chain itself~\cite{creighton1990protein, jacobsen2007exact, chan1989compact, agarwala1997local, bodroza2013enumeration}. Remarkably, 2D graphs often give qualitatively similar results to 3D graphs while affording a significant computational advantage~\cite{DillEA95}. Simple exact models have also been instrumental in the development of polymer physics more generally~\cite{orr1947statistical,barber1970random,de1979scaling,freed1987renormalization,cloizeaux1991polymers}.

The $O(n)$ model in statistical physics, which describes $n$-component spins interacting with their nearest lattice neighbors with isotropic couplings~\cite{stanley1968dependence,blote1994fully}, can be mapped to a problem involving self-avoiding loops~\cite{jacobsen2007exact, jacobsen1999universality}. The probability of a given loop configuration is weighted according to two fugacities: $x$ weighting the total perimeter of the loops, and $n$ the number of loops. The $O(n)$ phase diagram encapsulates many well-known models in statistical physics. For example, the ferromagnetic Ising model arises for $n = 1$, where the loops represent the domain walls in the low temperature limit (thus the Ising spins are defined on the planar dual lattice), while in the high temperature limit the natural degrees of freedom are again loops, though their expression in terms of the original spin variables is less intuitive. the $q$-state Potts models can be related to the $n = \sqrt{q}$ limit. as $x\rightarrow\infty$ every lattice site is visited by a loop and one arrives at the so-called ``fully-packed loop" (FPL) models, which additionally model crystal surface growth. and in the $x\rightarrow \infty, n\rightarrow 0$ limit, the system is forced into a single macroscopic loop traversing every site of the system, which is a Hamiltonian cycle~\cite{jacobsen1999universality, kondev1997liouville}. Studying Hamiltonian cycles can thus lead to an understanding of a variety of other models in a given system, and for this reason a great deal of work has been put into their study -- primarily in the simplest context of periodic regular lattices~\cite{batchelor1994exact, lieb1967exact, suzuki1988evaluation, batchelor1994exact, higuchi1998field, kondev1996operator, batchelor1996critical, nienhuis1982exact, bodroza2013enumeration,jacobsen2007exact,bodrovza2016some,kondev1997liouville, blote1994fully, kast1996correlation, orland1985evaluation, jacobsen1998field, kondev1998conformational, schmalz1984compact, agarwala1997local, peto2007generation}. Given, however, that the complex fractal structures  (e.g.~ polymer-protein structures) these objects aim to model often lack translational symmetry and favor disordered growth~\cite{melkikh2018generalized}, studies of Hamiltonian cycles in settings where translational symmetry is absent may unearth important clues toward the universality of these results.  

In this paper we present a simple algorithm for constructing Hamiltonian cycles and fully packed loops on a set of arbitrarily large finite subgraphs of infinite graphs which do not admit periodic planar embeddings. Specifically, we consider graphs formed from subsets of the edges and vertices of Ammann-Beenker (AB) tilings (Fig.~\ref{fig:AB_H_cycle}). These are two-dimensional (2D) infinite aperiodic tilings~\cite{lloyd2021statistical,GrunbaumShephard, ammann1992aperiodic} built from two tiles, a square and a rhombus. They lack the translational symmetry of periodic lattices but nevertheless feature long-range order, evidenced by sharp spots in their diffraction patterns~\cite{baake2012mathematical,GrunbaumShephard,Senechal}. The long-range order of AB enables exact analytical results to be proven, while their infinite extent allows consideration of the thermodynamic limit of an infinite number of vertices, of interest in condensed matter physics where physical quasicrystals are known with the symmetries of AB~\cite{wang1987two,Wang88}. Recent experiments have demonstrated tunable quasicrystal geometries in twisted trilayer graphene~\cite{Uri2023}, while eightfold symmetric structures have also been created in optical lattices~\cite{Viebahn19,Sbroscia20}. Quasicrystals host a broad range of exciting physical phenomena from exotic criticality~\cite{Sommers23,Gokmen23,PhysRevLett.131.173402} to charge order~\cite{FlickerVanWezel15,FlickerVanWezel15B} to topology~\cite{TopoQC,MadsenEA13,FlickerVanWezel15B}, with the lack of periodicity often leading to novel behaviors.

Efficient Hamiltonian cycle constructions exist for certain special classes of graph~\cite{gardner1957mathematical,bondy1976method,robinson1992almost}. One example is four-connected planar graphs, defined as requiring the deletion of at least four vertices to disconnect them~\cite{tutte1956theorem}. This result was recently employed in the elegant design of ``quasicrystal kirigami"~\cite{Liu22} using a construction based on Hamiltonian cycles defined on the planar dual to AB. However, to our knowledge AB tilings themselves are not covered by any such special case (they are three-connected). The unexpectedness of (arbitrarily large finite subgraphs of) AB tilings' admittance of Hamiltonian cycles can be seen by comparison to rhombic Penrose tilings, which are in many ways similar to AB. These, too, are a set of infinite 2D tilings built from two tile types (two shapes of rhombus); they are again aperiodic but long-range ordered; and the resulting graphs are bipartite, meaning the vertices divide into two sets such that vertices in one set connect only to vertices in the other -- another property shared with AB. Yet Penrose tilings cannot admit Hamiltonian cycles, because they do not admit perfect dimer matchings (sets of edges such that each vertex meets precisely one edge)~\cite{flicker2020classical}. The latter is a necessary condition for the former, since deleting every second edge along a Hamiltonian cycle results in a perfect matching. It is difficult to think of any special case which could cover AB tilings which would not also cover Penrose tilings. 

The difficulty of constructing Hamiltonian cycles in arbitrary graphs can be made precise using the notion of computational complexity. Given a graph, the question of whether it admits a Hamiltonian cycle lies in the complexity class ``nondeterministic polynomial time complete" (NPC)~~\cite{karp1972reducibility,garey1979computers}. These problems are prohibitively hard to solve -- the fastest known exact algorithms scale exponentially -- but a given solution can be checked quickly, in polynomial time (P). Completeness refers to the fact that if a polynomial-time algorithm were found which solved any NPC problem, all problems in the broader class NP would similarly simplify. Loosely, all problems in NP contain a bottleneck in NPC.

We do not purport to solve an NPC problem. Rather, we show that AB is a previously unknown special case of the Hamiltonian cycle problem which lies in P rather than NPC. As such, it does not follow that all NP problems can be solved in polynomial time on AB. However, we might reasonably hope that our Hamiltonian cycles permit new solutions to certain problems on AB, and indeed we find that this is the case. Using both the Hamiltonian cycles and the inherited discrete scale symmetry of AB tilings, we find exact solutions on AB to a range of other nontrivial problems which are NPC in general graphs~\cite{garey1979computers}. We present three cases which have important applications to physics.

For example, by solving the equal-weight traveling salesperson problem (Sec.~\ref{subsec:impossible1}) on AB we provide a maximally efficient route for a scanning tunneling microscopy (STM) tip to follow in order to scan every atom on the surface of an AB quasicrystal in a given area. Given that a state of the art STM measurement might take on the order of a month, such efficient routes are highly desirable. The problem of finding them is not present in periodic crystals, where efficient routes can be found by following crystal symmetries.

By solving the longest path problem (Sec.~\ref{subsec:impossible2}) on AB we identify the longest possible path visiting every site within a given region. This shows that long, flexible molecules, such as polymers, can adsorb onto the surface of AB quasicrystals with maximal efficiency. Moreover, by breaking the longest path into smaller units, our solution demonstrates that collections of flexible molecules of arbitrary lengths can also adsorb at maximal packing density. This has potential applications to catalysis, in which the activation energy of a reaction between molecules is lowered by having them adsorb onto a surface. Quasicrystal surfaces are emerging as an interesting material for adsorption because they provide different local environments with a range of bond angles~\cite{mcgrath2010surface}, in contrast to the periodic surfaces of crystals. They have potential applications in hydrogen adsorption and storage~\cite{dubois2000new,kweon2022quantitative}, low friction machine parts, and nonstick coatings~\cite{dubois2000new,diehl2008gas}.

The three coloring problem asks whether it is possible to color the tiles of a planar tiling with three colors such that no two tiles sharing an edge share a color~\cite{baxter2016exactly}. We solve this problem on AB in Sec.~\ref{subsec:impossible6}. The three colors can represent the degrees of freedom in the three-state Potts model of nearest neighbor magnetic interactions. The three-colored AB tiling thereby gives an unfrustrated ground state for the antiferromagnetic $q$-state Potts model, with spins defined on the faces of AB, for any $q\ge 3$. 

The Potts model was originally introduced to describe interactions between magnetic ions~\cite{wu1982potts,baxter2016exactly}, gaining relevance to quasicrystals with the  discovery of ordered magnetic states in these materials~\cite{Hattori95,Islam98,Sato00,Goldman13,Tamura21}. More recently it, too, has be used as a simple model of protein folding, with Potts variables representing the $q$ species of amino acid comprising the protein chain~\cite{garel1988mean,Li1996,de2017statistical}. The interaction between amino acids can be either repulsive or attractive depending on the solvent which surrounds the protein. Compared to the typical approximation that the protein folds on a square lattice \cite{chan1989compact,bodroza2013enumeration,de2017statistical}, AB arguably defines a better approximation to the real problem, in which the amino acids have continuous spatial degrees of freedom. This is because AB has a larger range of bond angles, better approximating a continuous rotational symmetry than any 2D regular lattice.

In Appendix~\ref{app:appendixB} we solve a further three problems, on AB, which are known to be NPC in general graphs: the minimum dominating set problem (Sec.~\ref{subsec:impossible3}); the domatic number problem (Sec.~\ref{subsec:impossible4}); and the induced path problem (Sec.~\ref{subsec:impossible5}). Doubtless many more examples can be found. The fact that many NPC problems admit polynomial-time solutions on AB suggests that the discrete scale symmetry of quasiperiodic lattices may be as powerful a simplifying factor as the discrete translational symmetry widely used to obtain exact results on periodic lattices.

The remainder of this paper is organized as follows. We introduce the necessary background on the AB tilings and graph theory in Sec.~\ref{sec:background}. In Sec.~\ref{sec:proof} we prove the existence of Hamiltonian cycles on arbitrarily large finite subgraphs of AB tilings. We prove along the way the possibility of fully packed loops, and give details of our algorithm. In Sec.\ref{sec:solutions_maths} we utilise the approach to present exact solutions, \textrm{on the same graphs}, to three problems which are NPC for general graphs: the equal-weight travelling salesperson problem (Sec.~\ref{subsec:impossible1}); the longest path problem (Sec.~\ref{subsec:impossible2}); and the three-coloring problem (Sec.~\ref{subsec:impossible6}). We comment on the applications for each. In Appendix \ref{app:appendixB} we provide exact solutions to three other problems on these graphs: the minimum dominating set problem (Sec.~\ref{subsec:impossible3}); the domatic number problem (Sec.~\ref{subsec:impossible4}); and the induced path problem (Sec.~\ref{subsec:impossible5}). In Sec.~\ref{sec:conclusions} we place these results in a broader context.

\begin{figure*}[t]
\includegraphics[width=0.9\textwidth]{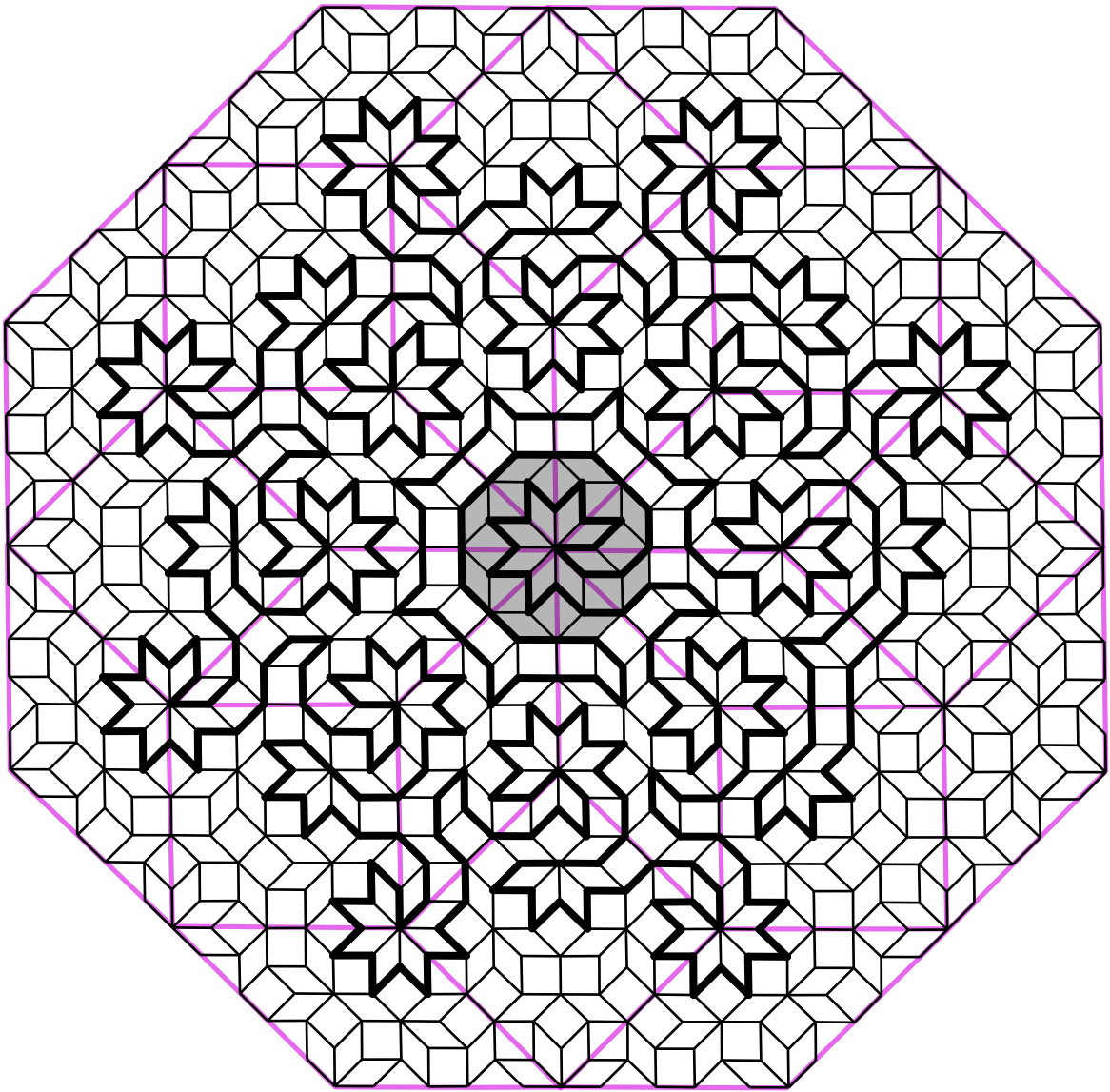}
\caption{A patch of AB tiling showing the 8-fold symmetric region $W_{1}$, and $W_{0}$ in grey. $W_1$ is the twice-inflation of $W_0$, formed from the double-inflated tiles shown in Fig.~\ref{fig:inflation1}; purple lines show the boundary of these tiles (and do not belong to the tiling itself). The thick black edges form a Hamiltonian cycle on the set of vertices they visit; this set is termed $U_1$.}
\label{fig:AB_H_cycle}
\end{figure*}

%
\section{Background}
\label{sec:background}
%

%
\subsection{Ammann Beenker Tilings}
\label{subsec:AB}

The quasicrystal stuctures represented by the Ammann-Beenker tilings lack the translation symmetry of periodic graphs and accordingly, mathematical results in the thermodynamic limit are more challenging to obtain. However, the translation symmetry of ordinary crystals is replaced by a \emph{discrete scale symmetry} which underlies the unique features of quasicrystal systems. 

\begin{figure}[h]
   \includegraphics[width = \linewidth]{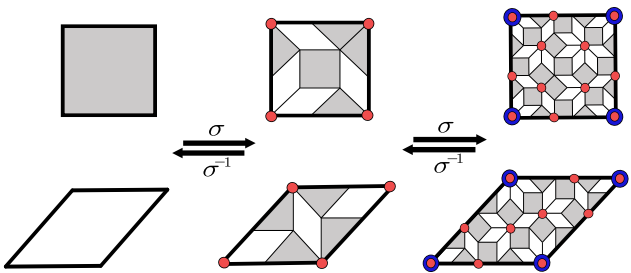}
   \caption{The `inflation' rule $\sigma$, in which a tile is decomposed into smaller tiles. Typically this would be followed by a rescaling (inflation) of all lengths by a factor of the silver ratio $\delta_S = 1+\sqrt{2}$, although as we are only concerned with graph connectivity, inflation and decomposition are equivalent. Vertices of once- and twice-inflated tiles are shown in red and blue respectively. We denote the once-inflated tiles to be level $L_{1/2}$ and the twice-inflated tiles to be level $L_1$.}
   \label{fig:inflation1}
\end{figure}

We will first describe the construction of the AB tilings in the thermodynamic limit~\cite{baake2013aperiodic}, and then discuss the scale symmetry of the tilings, which is central to our proof of Hamiltonian cycles. 

Each tiling is built from two basic building blocks, called `prototiles': a square tile and a rhombic tile with acute angle $\pi/4$. Both tiles are taken to have unit edge length. Starting from any `legal' patch of a few tiles\footnote{Legal patches can be defined according to edge matching rules~\cite{baake2013aperiodic}, and simply ensure that the starting patch corresponds to a group of tiles that can be found together within the infinite tiling. In practice, usually a single square or rhombus tile is used as the initial patch.}, the tiling is then built by repeatedly applying an `inflation rule' to the tiles: every tile is first `decomposed' into smaller copies of the two tiles, as defined in Fig.~\ref{fig:inflation1}, and then the new edges are rescaled (inflated) by a factor of the silver ratio, $\delta_S = 1+\sqrt{2}$, so that the tiles have unit edge length again. The edges and vertices of the tiling define a graph. The graph is bipartite, meaning the vertices divide into two subsets where edges of the graph only join members of one subset to the other. Since the arguments presented in this paper only rely on the connectivity of the graph, the rescaling is not important for our purposes --- hence inflation and decomposition are interchangeable in what follows. Under inflation, the number of tiles grows exponentially and the infinite tiling is recovered in the limit 
\begin{equation}
	\mathcal{T} = \lim_{n\to \infty} \sigma^n(\mathcal{T}_0),
\end{equation}
where $\mathcal{T}$ is the infinite tiling, $\sigma$ the inflation rule, and $\mathcal{T}_0$ the initial patch. We use ``tiling" to mean the infinite tiling, and ``patch'' when referring to any finite set of connected tiles. The inverse process, termed `deflation', equivalently `composition' (followed by a rescaling), is uniquely defined and follows from the inflation rules.

The set of Ammann-Beenker tilings that can be created by inflation has infinite cardinality. There are two important consequences that follow from the inflation structure~\cite{baake2013aperiodic} which will be important later.

First, taking graph edges to be of unit length:
\begin{definition}[Linear Repetitivity~\cite{baake2013aperiodic,Haynes18}]\label{def:LR}
There is a finite $C>0$ such that, for any set of vertices $V$ appearing within a disk of radius $r$, every disk of radius $Cr$ contains a copy of $V$.
\end{definition}
AB is linearly repetitive~\cite{baake2013aperiodic}.

\begin{definition}[Local Isomorphism~\cite{GrunbaumShephard}]\label{def:LI}
Two tilings are locally isomorphic if any finite patch appearing in one appears in the other. 
\end{definition}
All AB tilings are locally isomorphic to one another~\cite{GrunbaumShephard,baake2013aperiodic}. In this paper `AB' refers to the set of all AB tilings unless otherwise stated. As a result of linear repetitivity and local isomorphism, it should be understood that reference to a vertex set $V$ in a particular AB tiling is meant modulo translations.

Within the locally isomorphic set we will not differentiate between arbitrary rescalings of the tilings. As the simplest example of this isomorphism, the set of edges and tiles surrounding each vertex in the tiling belongs to one of seven unique configurations, which we call `$m$-vertices' (where $m$ labels the vertex connectivity). In total the vertex configurations for AB are given by the $3$-, $4$-, $5_A$-, $5_B$-, $6$-, $7$- and $8$-vertices, with each configuration appearing with a frequency given by a function of the silver ratio $\delta_S$. The $3$-vertex, for example, occurs most frequently and makes up a fraction $\delta_S-2$ ($\sim 40\%$) of the AB vertices. The two configurations with five edges are distinguished by their behaviour under inflation. 

A special role is played by the 8-vertices of the tiling, as every vertex configuration inflates to an 8-vertex under at most two inflations~\cite{lloyd2021statistical}. This means that edges can be drawn between 8-vertices of an AB tiling so as to generate another AB tiling. Fig. ~\ref{fig:AB_H_cycle} highlights this symmetry. Similarly, some of the 8-vertices of the original tiling sit also at 8-vertices of the composed tiling, and therefore logically form the \emph{four-times} composed tiling. This hierarchy continues and results in the discrete scale symmetry exhibited by the AB tiling and several other aperiodic tilings, in turn responsible for many of the remarkable physical properties of quasicrystals. 

We call an $8$-vertex an $8_0$-vertex if under twice-deflation it becomes any $m$-vertex with $m\neq 8$ (most 8-vertices of the AB tiling are of this type); similarly we call it an $8_1$-vertex if it becomes an $8_0$-vertex under twice-deflation. Generalising, an $8_n$-vertex becomes an $8_0$-vertex after $2n$ deflations. The \emph{local empire} of a vertex configuration is the simply connected set of tiles that always appears around the vertex configuration wherever it appears in the tiling~\cite{fang2018non}. The local empire of the $8_0$-vertex, which we call $W_0$, is shown in Fig.~\ref{fig:AB_H_cycle} (also shown is the $8_1$ empire, $W_1$): it has a discrete 8-fold rotational symmetry ($D_8$ in Sch\"{o}nflies notation). The inflation rule maps $8_n$-vertices to $8_{n+1}$-vertices (applying the rule twice) while simultaneously growing the size of the vertex's local empire. The $D_8$ symmetry is preserved under inflation. Therefore $8_n$-vertices are accompanied by $D_8$-symmetric local empires $W_n$ having a radius of symmetry growing approximately as $R_n \sim \delta_S^{2n}$. Furthermore the $W_n$ empire contains all $W_{0\leq m\leq n}$ empires. 

It follows from linear repetitivity that any finite set of tiles (or vertices) in AB is contained in $W_n$ for sufficiently large $n$. On account of this, we will often focus on the $W_n$ to prove more general results. It will also prove useful to define a modified version of the AB tiling with all 8-vertices removed, dubbed the AB* tiling~\cite{lloyd2021statistical}.

In Fig.~\ref{fig:inflation1} we denote the individual square and rhombus (left) to be level $L_0$ and the \emph{twice} inflated square and rhombus (right) to be level $L_1$; this is because any vertex inflates to an 8-vertex under two inflations but not one, and the convention follows Ref.~\onlinecite{lloyd2021statistical}. We therefore designate the \emph{once}-inflated tiles (middle) to be at level $L_{1/2}$. 
\subsection{Graph terminology and conventions}
\label{subsec:graph}
A graph $G(V,E)$ is a set of vertices $V$ connected by a set of edges $E$. We consider undirected graphs, in which no distinction is drawn between the two directions of traversal of an edge. We also consider only bipartite graphs, in which the vertices divide into two sets such that edges only connect one set to the other. The cardinality of a graph is the cardinality of the set of vertices it contains (i.e. the number of vertices in the graph). We denote this $|G|$~\cite{tutte2001graph, bondy1976graph, chartrand1977introductory}. 

We define a `path' to be a set of edges joining a sequence of distinct vertices, and the length of the path to be the number of edges it contains. Given a set of edges $E'\subset G$ it may be possible to find an `alternating path', which is a path in $G$ along which every second edge belongs to $E'$ but all others do not~\cite{tutte2001graph}. We borrow this terminology from the theory of dimer matchings, in which $E'$ is a set of edges such that no two edges share a common vertex, although we consider different structures for $E'$ here. An `augmenting path' is an alternating path in which the first and last edges are not in $E'$. In general, `augmenting' an alternating path means to switch which edges along the path are in the set $E'$ and which are not.

In this paper, we consider graphs made from the vertices and edges of Ammann-Beenker tilings. We will use the shorthand $G=$~AB to denote the case that $G$ is the graph formed from any infinite Ammann-Beenker tiling. We will sometimes refer to graphs formed from finite patches of AB, which should be clear from context.

\section{Constructive Proof of Hamiltonian Cycles}
\label{sec:proof}

In this Section we prove the following.
\begin{theorem}\label{thm:H_cycles}
Given an AB tiling and a finite set of vertices $V\subset$~AB there exists a set $U_n$, where $V\subseteq U_n\subset$~AB, such that $U_n$ contains a Hamiltonian cycle $\mathcal{H}$.
\end{theorem}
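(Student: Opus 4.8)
The plan is to reduce the statement for an arbitrary finite $V$ to a statement about a single, highly symmetric patch, and then to build the cycle in two stages: first a fully packed loop configuration, then its fusion into one cycle. First I would invoke local isomorphism. Since $V$ is finite it occupies a bounded region, and because every finite patch recurs within the local empire of an $8_n$-vertex once $n$ is large enough, there is an $8_n$-vertex whose empire $W_n$ satisfies $V\subseteq W_n$; the radius of symmetry $R_n\sim\delta_S^{2n}$ grows without bound, so such an $n$ always exists. It then suffices to exhibit a vertex set $U_n$ with $V\subseteq U_n\subseteq W_n$ carrying a Hamiltonian cycle $\mathcal{H}$, and I may freely exploit the $D_8$ symmetry of $W_n$.

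Second, I would establish the weaker claim that $U_n$ admits a \emph{fully packed loop} configuration: a set of vertex-disjoint closed loops covering every vertex of $U_n$, i.e.\ a spanning subgraph in which every vertex has degree two. This is the easier object, and I expect it to follow constructively from the inflation hierarchy, since each decomposed tile can be assigned a short loop segment and the self-similarity under $\sigma$ forces these segments to close up consistently across the patch. I would organise the construction octant-by-octant, building the loops on a single $D_8$ sector and rotating. Verifying the base empire explicitly — the cycle on $U_1$ of Fig.~\ref{fig:AB_H_cycle} — then anchors an induction on $n$ that promotes a loop configuration at one scale to the next under inflation.

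Third, I would fuse the loops into a single Hamiltonian cycle using the alternating/augmenting-path machinery of Sec.~\ref{subsec:graph}. Taking $E'$ to be the current set of loop edges, two distinct loops running through a common local motif (for instance a square tile) can be merged by augmenting along the short alternating path that threads it, a swap that reconnects the two strands into one loop while still covering every vertex of the motif. Repeating such fusions reduces the number of loops one at a time; I would argue, via the connectivity of the auxiliary graph whose nodes are the loops and whose links are the motifs where a fusion is available, that the process terminates in exactly one loop, namely a Hamiltonian cycle on $U_n\supseteq V$.

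The hard part, and the step I would spend the most care on, is controlling \emph{global} connectivity from purely \emph{local} moves: both the loop-packing and the fusion steps act locally, so the real content is to guarantee that no vertex is ever stranded and that the fusion moves can always be chained until a single component survives, uniformly in $n$. Here the self-similarity is the decisive lever — because the annular region added in passing from $W_n$ to $W_{n+1}$ is an inflation of the previous one, it should suffice to verify coverage and mergeability for the finitely many prototile gadgets at a single scale and then propagate by induction, while the bipartite structure forbids the odd cycles that could otherwise obstruct a fusion.
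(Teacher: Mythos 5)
Your skeleton --- locate $V$ inside a large $D_8$ empire $W_n$ by local isomorphism, build fully packed loops, then merge them into one Hamiltonian cycle --- is indeed the paper's architecture, but both of your substantive steps have the same genuine gap: the 8-vertices. A per-tile assignment of loop segments cannot ``close up consistently'' at the corners of the inflated tiles. Those corners are precisely the 8-vertices of the tiling (every vertex becomes an 8-vertex under two inflations), and the inflated tiles meeting at a given corner realise the seven different AB vertex configurations, so the number of decorated edges such a vertex receives varies with its environment. Worse, at a corner where eight congruent rhombus tips meet, any fixed decoration of the prototiles produces an 8-fold-symmetric set of incident edges --- degree $0$ or $8$, never $2$. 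The paper's canonical decoration (Fig.~\ref{fig:proof}) embraces this obstruction: corners get degree zero, yielding fully packed loops only on AB* (the tiling minus its 8-vertices). The real content of the proof is then the hierarchical repair your plan omits: an $e_1$ edge is an alternating path, lying entirely inside one tile, which is the twice-inflation of a tile edge; augmenting closed loops of $e_1$ edges adds all $8_0$-vertices while leaving every already-covered vertex at degree two, and iterating with $e_{n+1}$ edges (Fig.~\ref{fig:dimer_inflations}) adds the $8_n$-vertices level by level.

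Your fusion step cannot rescue this, for a structural reason: merging loops by augmenting an alternating cycle (the only such move that preserves degree two everywhere) never changes which vertices are covered, so no amount of local fusion will pull in 8-vertices missed by the packing stage; and the connectivity of your auxiliary loops-and-motifs graph is asserted at exactly the point where the difficulty lies. The paper never needs that argument, because merging is built into the hierarchy: orientations of the $e_n$ edges are chosen so that wherever an $e_{n+1}$-loop crosses an $e_n$-loop they overlap along one complete $e_n$; augmenting there twice is the identity, so both copies are deleted, and the higher-level loop thereby rewires distinct lower-level loops into a single loop (Figs.~\ref{fig:AP1} and \ref{fig:AP2}). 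Finally, your octant-by-octant, $D_8$-equivariant construction fails at the one vertex it cannot treat symmetrically: any cycle through the central 8-vertex uses exactly $2$ of its $8$ edges, which no $D_8$-invariant choice can produce. The paper must deliberately break the symmetry --- folding in one corner of the outermost star-shaped $e_n$-loop (Fig.~\ref{fig:thermodynamic_limit}) --- and it is this single move that simultaneously includes the centre and stitches every enclosed loop into the Hamiltonian cycle on $U_n$.
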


\begin{proof}\label{proof:H_cycles}
In Section~\ref{subsec:FPL_AB*} we identify a set of edges on the twice-inflated AB tiles such that every $8_{n<0}$ vertex (that is, every vertex which is not an $8$-vertex) meets two such edges. This constitutes a set of fully packed loops (FPLs), visiting every vertex precisely once, on AB*. We then focus on finite tile sets generated by $n$ twice-inflations of the local empire of the $8_0$-vertex, denoted $W_n$. These regions have $D_8$ symmetry. Any finite set of vertices $V\subset$~AB lies within an infinite hierarchy of $W_n$ for sufficiently large $n$, as shown in Section \ref{sec:background}. In Section~\ref{subsec:FPL_AB} we identify a method of reconnecting these fully packed loops so as to include into the loops all $8_0$-vertices within $W_n$, and then all $8_{0<m<n}$ vertices where the $8_{m-1}$ vertices have already been included. The result is fully packed loops on all vertices within $W_{n}$ except the central $8_n$ vertex. Finally, we show that a subset of these loops can be joined into a single loop which additionally visits the central vertex. We denote the set of vertices visited by this loop $U_n$. 
As the patches $U_n$ grow exponentially with $n$ so as to cover arbitrarily large areas of AB, it follows from linear repetitivity (Definition \ref{def:LR}) that $V\in U_n$ for sufficiently large $n$. 

Since the cycle constructed on $U_n$ visits each of its vertices precisely once, this proves Theorem~\ref{thm:H_cycles}.
%
\end{proof}

The proof is constructive, returning $\mathcal{H}$ given $V$, and is linear in the number of vertices in $U_n$.
We place a precise bound on the necessary size of $U_n$ for any given $V$ in Section~\ref{subsec:bound_Un}.
Furthermore, the Hamiltonian cycle on $U_n$ visits a simply-connected set of vertices whose cardinality increases exponentially with $n$, and which therefore admits a straightforward approach to the thermodynamic limit.
We consider this limit, $U_{n\rightarrow\infty}$, in Section \ref{subsec:Thermodynamic_limit}.

\subsection{Constructing Fully Packed Loops on AB*}
\label{subsec:FPL_AB*}

\begin{figure}[h]
\centering
\includegraphics[width=.31\textwidth]{}
\hspace{0.2cm}
\includegraphics[width=.15\textwidth]{}
\caption{Twice-inflations of each of the two AB prototiles. We denote the smaller tiles composition-level zero, $L_0$, and the larger tiles $L_1$. The thick black edges visit all but the corner vertices. We call these edges $e_0$ edges. Since the 8-vertices at $L_0$ corresponds to the tile vertices at $L_1$, the union of $e_0$ constitutes fully packed loops on AB* (AB without the 8-vertices). Augmenting the red path (switching covered and non-covered edges) places loop ends on the two corner vertices while still visiting the original vertices. The red path can be thought of as the twice-inflation of an $L_1$ edge, and we term it $e_1$. It has the same effect along any tile edge. Augmenting cycles built from $e_1$ then places all visited 8-vertices onto the same loop.
}
\label{fig:proof}
\end{figure}

In Fig.~\ref{fig:proof} we show the twice-inflations of the two prototiles. We denote the smaller tiles as constituting composition level zero, $L_0$, and the larger tiles $L_1$. We work exclusively with twice-inflations, rather than single inflations, as every vertex becomes an 8-vertex under twice-inflation, but some do not do so under a single inflation. 

In Fig.~\ref{fig:proof} we highlight (thick black edges) a subset of $L_0$ edges on each $L_1$ tile. When the $L_1$ tiles join into legitimate AB patches, every vertex in $L_0$ is visited by precisely two thick edges, except those vertices sitting at the corners of the $L_1$ tiles (and vertices on the patch boundary, which are not important when we take the thermodynamic limit since we deal only with finite subgraphs interior to the region). We call each of these edges an $e_0$ edge. The corner vertices of $L_1$ tiles are exactly the 8-vertices of $L_0$ (this follows from the fact that every vertex becomes an 8-vertex under at most two inflations). Therefore this choice causes every vertex of AB* at $L_0$ to be met by two $e_0$ edges. Since every vertex meets precisely two edges, the sets of edges must form closed loops~\footnote{Strictly, loops can be open if they connect two points on the boundary. However, these open paths lie entirely in the region $W_n\setminus U_n$ so are not relevant to our construction on $U_n$.}. Therefore these loops constitute a set of fully packed loops (FPLs) on AB* as required.

While it is possible to find other sets of edges with these properties, we designate this choice the canonical one. With it, all closed loops respect the $D_8$ local symmetry of AB and AB*.

\subsection{Constructing Fully Packed Loops and Hamiltonian Cycles on AB}
\label{subsec:FPL_AB}

We next seek to construct FPLs on AB rather than AB* by adding the missing 8-vertices onto the loops. In Fig.~\ref{fig:proof} we highlight in red an alternating path (with respect to $e_0$ edges) which connects nearest-neighbor vertices in $L_1$. Augmenting the red path (swapping which edges are covered by $e_0$ edges) has a number of effects. First, it 
places $e_0$ edges touching the two $L_0$ 8-vertices on which the alternating path terminates. Second, it increases the total number of $e_0$ edges by one. Third, any vertex which was previously visited by a pair of $e_0$ edges is still visited by a pair of $e_0$ edges. 

While other such paths are possible, this canonical choice has advantages when combined with the canonical choice of $e_0$ edges. First, it can be seen in Fig.~\ref{fig:proof} that exactly the same shape of path can be used on any edge of either $L_1$ tile to cause the set of three effects just listed. Second, it does so entirely within the tile itself, and so these augmentations can be carried out without reference to neighboring $L_1$ tiles. The augmentation can be undone by augmenting along the same path a second time, which turns out to be key.

The red alternating paths trace a route along $L_0$ graph edges which follows the $L_1$ edges as closely as possible while also alternating with reference to the $e_0$ edge placement. In a natural sense, then, the red path is the twice-inflation of an edge of an $L_1$ tile. We therefore refer to the red path as an $e_1$ edge. In this way, $e_0$ edges connect $8_{n<0}$-vertices (i.e.~any vertices \emph{except} 8-vertices at $L_0$), while $e_1$ edges connect $8_0$-vertices (8-vertices which can survive precisely zero deflations while remaining 8-vertices). Further inflations can be carried out by stitching $e_1$ edges together in exactly the same way that $e_1$ was formed from $e_0$. For example, $e_2$ edges, connecting $8_1$-vertices, can be built from $e_1$ edges, and can equivalently be thought of as built from more $e_0$ edges. In general, $8_n$-vertices can be connected by $e_{n+1}$ edges. The first three levels are shown in Fig.~\ref{fig:dimer_inflations}.

In the limit $n\rightarrow\infty$, $e_n$ is a fractal: under $e_0\rightarrow e_1$ the initial side length of any prototile is divided into 9 segments each of length $\delta_S^{-2}$. The same scaling occurs for all subsequent inflations, and so the box counting dimension of $e_\infty$ is given as 
\begin{align}
\textrm{dim}(e_\infty)=\lim_{n\rightarrow\infty}\frac{\log\left(9^n\right)}{\log\left(\delta_S^{2n}\right)}=\frac{1}{\log_3\left(1+\sqrt2\right)}\approx 1.246.
\end{align}
This fits with the intuition that the curve is space-filling.

\begin{figure}[h]
\includegraphics[width=.4\textwidth]{./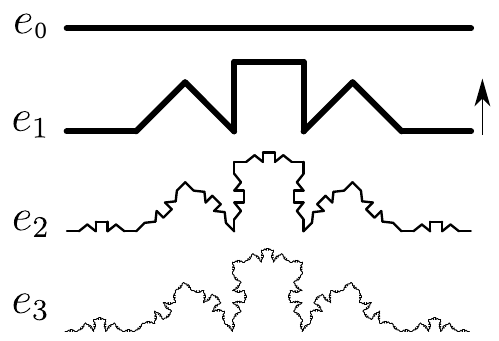}
\caption{The red alternating path in Fig.~\ref{fig:proof} can be thought of as a twice-inflation of an $e_0$ edge from which loops are constructed at level zero ($L_0$). It connects nearest neighbors in the $L_1$ tiling (8-vertices in the $L_0$ tiling) while following an alternating path in the canonical choice of $e_0$ placements. We therefore denote this path an $e_1$ edge. Alternating paths connecting higher-order $8_n$-vertices can be constructed by concatenating $e_{n+1}$ edges in the same way that $e_1$ was formed from a concatenation of $e_0$; levels $e_2$ and $e_3$ are shown here. We define the orientation of $e_n$ according to the arrow.
}
\label{fig:dimer_inflations}
\end{figure}

\begin{figure*}[t]
\centering
\includegraphics[width=15cm]{./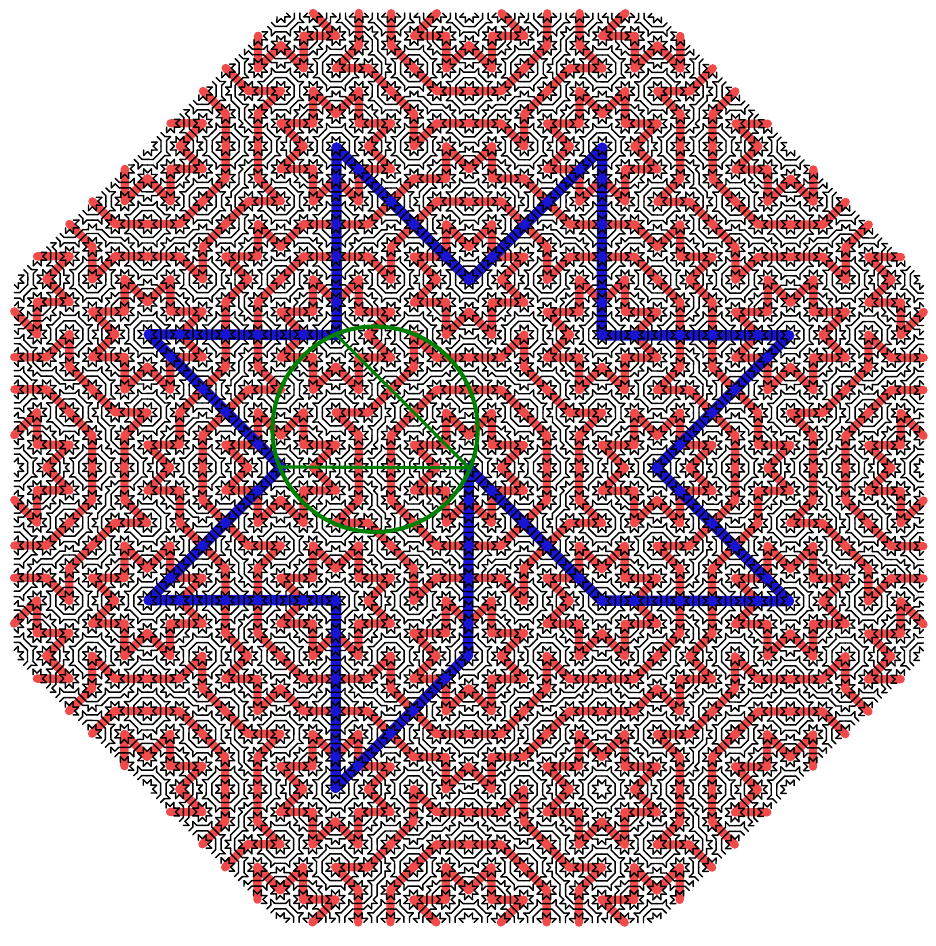}
\caption{The canonical placements of $e_0$ edges (black) form $e_0$-loops visiting every $8_{n<0}$-vertex (i.e.~every vertex which is not an $8$-vertex). The red lines form $e_1$-loops, along which $e_1$ edges from Fig.~\ref{fig:dimer_inflations} can be placed adding all $8_0$-vertices onto loops. Similarly the blue star forms an $e_2$-loop, along which $e_2$ edges from Fig.~\ref{fig:dimer_inflations} can be placed adding $8_1$-vertices onto loops (see Fig.~\ref{fig:AP1}). The central 8-vertex is now added onto loops by folding one corner of the blue star inwards to break the 8-fold symmetry. Here we have shown the process explained in Sec.~\ref{subsec:FPL_AB} to order $n=2$, but it can be iterated to any order $n$. 
In green is the largest disk fitting within the blue loop; the disk connects three 8-vertices with green chords.
\label{fig:thermodynamic_limit}
}
\end{figure*}

To complete the proof of the existence of an FPL on AB it remains to show that all 8-vertices can be placed onto closed $e_n$-loops for sufficiently large $n$. The canonical edge covering places all $8_{n<0}$ vertices of AB onto $e_0$-loops. To add all $8_0$-vertices onto loops, we place loops of $e_1$ edges according to the canonical edge placement. This placement was defined at level $L_0$; but note that the twice-(de)composition of any AB tiling is another AB tiling. Therefore the placement is well defined at all levels. Placing an $e_1$ edge means augmenting a path at $L_0$. Augmenting the closed loops of paths just defined places all $8_0$-vertices visited by that loop onto loops of $e_0$ edges as required. We then proceed by induction, adding $8_n$-vertices by connecting them with loops of $e_{n+1}$ edges. After $n$ steps, every vertex of order $8_{\le n}$ is contained in a set of fully packed loops defined on those vertices. By design, placing the $e_n$ edges does not cause problems with the existing matching of $e_{n-1}$ edges. We refer the reader to Fig.~\ref{fig:thermodynamic_limit} for an illustration of the $e_1$ loop structure (red), which clearly mirrors the smaller $e_0$ loop structure (black) constructed in Sec.~\ref{subsec:FPL_AB*}.

While $e_1$ edges must follow the edges of $L_1$ tiles, two tiles meet along any edge. There is therefore a choice of two orientations of each $e_1$ owing to the choice of which of the two tiles $e_1$ lies within. Each $e_1$ orientation can be chosen freely with the statements in the previous paragraph remaining true. However, we find that there is again a natural choice. Define the orientation of $e_1$ to be in the direction indicated in Fig.~\ref{fig:dimer_inflations}. At $L_1$, we choose the sequence of $e_1$ edges along a loop to point alternately into then out of the loop on which it sits (this is implicit in the construction of $e_2$ and $e_3$ in Fig.~\ref{fig:dimer_inflations}). Since AB is bipartite, all loops are of even length, and there is never an inconsistency. When $e_2$ edges are placed, these will cut through $e_1$-loops. It is always possible to choose the orientations so that wherever $e_1$-loops and $e_2$-loops intersect, they do so along the length of one $e_1$ (recall that an $e_2$ is built from multiple $e_1$). Since these two $e_1$ overlap perfectly, augmenting them both has no overall effect at $L_0$. We can therefore delete both $e_1$. The effect is that the $L_2$ loops no longer intersect the $L_1$ loops: instead, they rewire separate $L_1$ loops so as to make them join together. Augmenting the $L_1$ star rewires all `cut' loops into one. By induction the rewiring works at all levels. The process is shown in detail in Appendix A, Fig.~\ref{fig:AP1} and Fig.~\ref{fig:AP2}.

\begin{figure*}[t]
    \centering
    \includegraphics[width=18cm]{./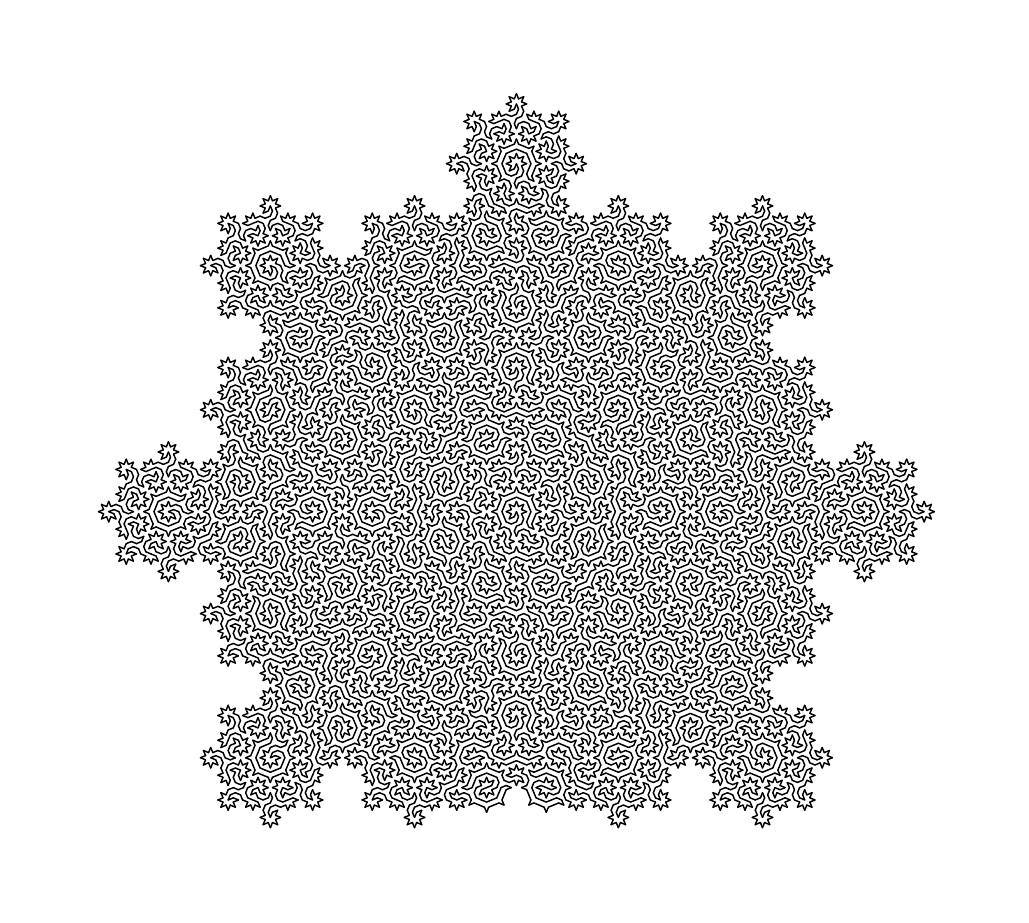}
    \caption{The Hamiltonian cycle visiting all the vertices of a $U_2$ region (the AB tiles themselves are omitted for clarity). The image is obtained from Fig.~\ref{fig:thermodynamic_limit} by placing $e_1$ and $e_2$ edges in alternate orientations along the red and blue loops respectively, as shown in Fig.~\ref{fig:AP1}, and augmenting. Note the present figure has been rotated through $1/16^{\textrm{th}}$ of a turn relative to other figures to utilise the page efficiently.}
    \label{fig:giant_H_cycle}
\end{figure*}

%
%
While it is already possible to reconnect many of these loops together, thereby building Hamiltonian cycles on subsets of the $W_n$ vertices, the sets of vertices visited by such cycles always encircle sets of vertices they do not visit. To remedy this, it is necessary to break the $D_8$ symmetry. To see that this is necessary, consider the simplest $D_8$ Hamiltonian cycle, which is a star visiting the eight vertices adjacent to any 8-vertex, and the eight vertices immediately beyond them. To include the central 8-vertex on a loop, one of the points of the star must turn inwards, breaking the symmetry. 

In Fig.~\ref{fig:thermodynamic_limit} we show a simple way to include the central region. The outermost $e_n$-loop encircling the $8_n$-vertex is the $2n$-fold inflation of the smallest star loop. By folding in a single corner of this inflated star, the loop now visits the central $8_n$-vertex. In so doing, it connects \emph{every} loop contained within it (and all those it passes through) into a single loop. The result is therefore a Hamiltonian cycle on the simply-connected set of vertices visited by this deformed star. We denote this set $U_n$ to distinguish it from the $D_8$ symmetric set $W_n$. The cycle on $U_2$ is shown in Fig.~\ref{fig:giant_H_cycle}.

There is a topological reason this construction must work. Every closed $e_0$ loop bounds a $D_8$-symmetric region centred on an 8-vertex. Topologically we can imagine deforming the $e_0$ loops however we like, provided they cross neither 8-vertices nor other loops. Fig.~\ref{fig:proof} shows that any path connecting two $L_1$ corners (8-vertices) along an $L_1$ edge must cross exactly four $e_0$ loops. One of these loops closes around one of the 8-vertices, while the three other loops close around the other 8-vertex (as well as other 8-vertices). This follows from Fig.~\ref{fig:proof}, but a detailed proof appears in Appendix D of Ref.~\onlinecite{lloyd2021statistical}. The situation is shown schematically in Fig.~\ref{fig:topology}A. Augmenting the red $e_1$ path along the $L_1$ edge  rewires the $L_0$ loops according to Fig.~\ref{fig:topology}B. The same rewiring occurs for the blue edge in Fig.~\ref{fig:topology}C. While the augmentation illustrated in Fig.~\ref{fig:topology} leaves two string ends free, augmenting the closed $L_1$ star loop leads to a single closed loop at $L_0$. All subsequent levels of inflation $L_n$ follow precisely the same pattern of connectivity. Having $e_1$ orientations (arrows in Fig.~\ref{fig:topology}, defined in Fig.\ref{fig:dimer_inflations}) point alternately in and out of the loop introduces a chirality that guarantees two facts. First, whenever an $e_{n+1}$-loop cuts through $e_n$-loops, the $e_n$-loops merge. Second, the folded-corner outermost loop connects \emph{all} loops within it. While this might at first seem miraculous, really it is no more complicated than the (folded) 16-vertex $L_0$ star hitting all 16 vertices in $U_0$: the $U_n$ Hamiltonian cycle simply has $n$ inflations (of both tiles and loops) applied according to the rules of Fig.~\ref{fig:proof}.

\subsection{Bounding the size of $U_n$}\label{subsec:bound_Un}

To see that any vertex set $V$ must be contained within some $U_n$, recall linear repetitivity: any $V$ lying within a disk of radius $r$ must appear within \emph{all} disks of radius $Cr$. Hence, any disk of radius $Cr$ must contain all vertex configurations that can be contained within a disk of radius $r$. 

Fig.~\ref{fig:thermodynamic_limit} shows a green disk contained within $U_2$. This is the largest disk fitting within the blue loop. Larger disks fit within $U_2$, but the boundary of successive $U_n$ becomes fractal, whereas the equivalent to the blue loop maintains its shape at all levels of inflation, making the subsequent statements simpler. Denoting the green chord lengths $l_2$ (noting that these are edges of an $L_2$ rhombus) the radius of this disk is $2^{-1/4}\delta_S^{-1/2}\,l_2$. The largest equivalent disk in region $U_n$ similarly has radius $2^{-1/4}\delta_S^{-1/2}\,l_n$. Denoting $e_0$ to be of unit length, the inflation rules of Fig.~\ref{fig:inflation1} imply that $l_n=\delta_S^{2n}$, and so $U_n$ contains a disk of radius $2^{-1/4}\delta_S^{2n-1/2}$. Hence, linear repetitivity implies that every $U_n$ contains every vertex set contained within a disk of radius $2^{-1/4}\delta_S^{2n-1/2}/C$. This radius grows exponentially and without bound in $n$, and so any $V$ is contained within $U_n$ for sufficiently large $n$ (indeed, it is contained within all $U_{n\ge M}$ for sufficiently large $M$).

\subsection{Thermodynamic limit}\label{subsec:Thermodynamic_limit}

On any $U_n$ we can construct a Hamiltonian cycle by the method outlined in Sec.~\ref{subsec:bound_Un}. Since for any finite vertex set $V$ we can find a $U_n$ such that $V \subseteq U_n$, the $U_n$ admit a natural extension to the thermodynamic limit of macroscopically large cycles, $U_{n\rightarrow\infty}$ (meaning arbitrarily large finite patches). The algorithm is of linear complexity in the number of vertices within $U_n$.

\begin{figure}[h]
\includegraphics[width=\columnwidth]{./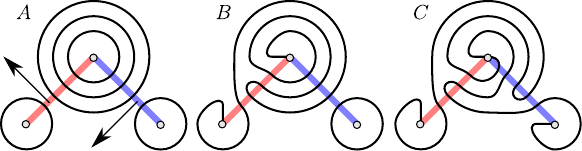}
\caption{The topology underlying the Hamiltonian cycle construction. $A$: every 8-vertex (grey disks) separated from another by an $L_1$ edge (red/blue) is enclosed by $e_0$ loops (black) as shown (see Fig.~\ref{fig:proof}). Other 8-vertices are omitted for clarity. $B$ ($C)$: augmenting the red (blue) $e_1$ edge, with $e_1$ orientation indicated with arrows as in Fig.~\ref{fig:dimer_inflations}, rewires the loops as shown. Hence, augmenting a \emph{loop} of $e_1$ edges results in a \emph{single} $L_0$ cycle visiting the chosen 8-vertices. The argument holds at all levels of inflation.}
\label{fig:topology}
\end{figure}

The region $U_\infty$, without any surrounding tiles, can itself be thought of as an infinite AB tiling, in the sense that it is an infinite simply connected set of tiles generated by repeated inflations of a legitimate AB patch (followed by deletion of the surrounding tiles in $W_\infty$). 

It should be noted that an AB tiling can have at most one global centre of $D_8$ symmetry (more than one would violate the crystallographic restriction theorem~\cite{AshcroftMermin}), and the set of tilings with a global $D_8$ centre is measure zero in the set of all possible AB tilings~\cite{baake2013aperiodic}. However, our definition of the thermodynamic limit does not rely on our tiling having a global $D_8$ centre -- it only needs to contain at least one 
%
$U_{n\rightarrow\infty}$, and all AB tilings contain infinitely many.
%

%
\section{Solutions to other non-trivial problems on AB}
\label{sec:solutions_maths}
%

The question of whether an arbitrary graph admits a Hamiltonian cycle lies in the complexity class NPC. Problems in NPC are decision problems, meaning they are answered either by yes or no. The optimisation version of these problems -- to provide a solution if one exists -- is instead typically in the class NP-Hard (NPH). Finding a Hamiltonian cycle on an arbitrary graph is NPH. 

In this section we provide exact solutions on AB for three problems made tractable due to discrete scale symmetry and/or by our construction of Hamiltonian cycles. In each case we state the decision problem and the corresponding optimisation problem. Our choice of problems is motivated by the fact that on general graphs these decision problems lie in the complexity class NPC, and the corresponding optimisation problems lie in the complexity class NPH. By solving the problems on AB we again show that this setting provides a special, simpler case.

\subsection{The Equal-Weight Travelling Salesperson Problem}
\label{subsec:impossible1}

\textbf{Problem statement~\cite{garey1979computers, shmoys1985traveling}:} given a number of cities $N$, unit distances between each pair of connected cities, and an integer $k$, does there exist a route shorter than $k$ which visits every city exactly once and returns back to the original city? The corresponding optimisation problem is to find such a route.

\textbf{Solution on AB:} if cities are the vertices of $U_{n\geq 0}$, yes iff $k>|U_{n}|$. 

\textbf{Proof:} from a graph theory perspective, we can consider every city as a vertex and every direct route between a pair of cities as a weighted edge, where the associated weight denotes the distance between those cities. Finding the shortest route which can be taken by the travelling salesperson is equivalent to finding the lowest-weight Hamiltonian cycle, where the weight of the cycle is the sum of the weights of its edges. 

In this paper we have considered the unweighted AB graph, equivalent to setting all edge weights to one. In unweighted graphs the problem reduces to the equal-weight travelling salesperson problem. After this simplification, the unweighted decision problem becomes equivalent to the Hamiltonian cycle problem. Therefore the Hamiltonian cycle constructed in Section~\ref{sec:proof} solves the equal-weight travelling salesperson optimisation problem on $U_n$.

\subsubsection*{Application: scanning microscopy}
\label{subsubsec:STM}

One physical application of the travelling salesperson problem is to find the most efficient route to scan an atomically sized tip across a surface so as to visit every atom. For example, scanning tunneling microscopy (STM) involves applying a voltage between an atomically sharp tip and the surface of a material. Electrons tunnel across the gap between the tip and the sample, giving a current proportional to the local density of states in the material under the tip~\cite{binnig1983scanning, hansma1987scanning}. Magnetic force microscopy (MFM) uses a magnetic tip to detect the change in the magnetic field gradient~\cite{martin1987magnetic, hartmann1999magnetic}. The ultra-high resolution nature of the imaging means that state of the art measurements might take on the order of a month to scan a $100\,$nm $\times$ $100\,$nm square region of a surface~\footnote{State of the art STM measurements can obtain intra-unit-cell data on crystal surfaces; for a $100\times100\,$sq-nm region this would exceed 250,000 pixels. A typical measurement takes around $100\,$ms, so if each pixel is scanned at 100 energies, the result takes around 29 days. We thank J.~C.~S.~Davis for these estimates.}. While generally applied to periodic crystalline surfaces, STM and MFM can in principle be used to image the surfaces of aperiodic quasicrystals, including those with the symmetries of AB tilings~\cite{wang1987two}. Unlike in the crystalline case, the most efficient route for the STM tip to visit each atom is not obvious in these cases. Our solution to the TSP optimisation problem on AB provides a maximally efficient route for aperiodic quasicrystals with the symmetries of AB tilings. While the surface would need to be scanned once in order to establish the $U_n$ regions to study, the purpose of STM and MFM would be to detect changes in the material under changing conditions (say, temperature or magnetic field), and the route provides maximum efficiency upon multiple scans.

\subsection{The Longest Path Problem}
\label{subsec:impossible2}

\textbf{Problem statement~\cite{garey1979computers, karger1997approximating}:} given an unweighted graph $G$ and an integer $k$, does $G$ contain a path of length at least $k$? We emphasise that the path may not revisit vertices. The corresponding optimisation problem is to find a maximum length path.

\textbf{Solution:} yes, if $G=U_{n\geq 0}$ and $k\leq|U_{n}|-1$.

\textbf{Proof:} a Hamiltonian path can be obtained by removing any edge from a Hamiltonian cycle. The results of Section \ref{subsec:impossible1} imply that there exists a path of length $|U_n|-1$ in any region $U_n$. 

\textbf{Comment:} shorter paths can be found by deleting further contiguous edges. Since any AB tiling contains regions $U_n$, for any $n$, it follows that the tiling contains paths of any length. 

\subsubsection*{Application: adsorption}
\label{subsubsec:adsorption}

The `dimer model' in statistical physics seeks sets of edges on a graph such that each vertex connects to precisely one edge. It was originally motivated by understanding the statistics and densities of efficient packings of short linear molecules adsorbed onto the surfaces of crystals~\cite{Roberts1935,fowler1937attempt,Heilmann_lieb}. Noting again that certain physical quasicrystals have the symmetries of AB tilings on the atomic scale, the existence of a longest path visiting every vertex shows that a long flexible molecule such as a polymer could wind so as to perfectly pack an appropriately chosen surface of such a material. The path can be broken into segments of any smaller length, showing that flexible molecules of arbitrary length can pack perfectly on to the surface. Dimers are returned as a special case.

Adsorption has major industrial applications. In catalysis, for example, reacting molecules can find a reaction pathway with a lower activation energy by first adsorbing onto a surface. While efficient packings can be identified on periodic crystals such as the square lattice, these feature a limited range of nearest-neighbor bond angles (with only right angles appearing in the square lattice). Realistic molecules, which have some degree of flexibility, might do better on quasicrystalline surfaces which necessarily contain a range of bond angles (four in AB). Other uses could include (hydro)carbon sequestration and storage, and protein adsorption~\cite{Roberts1935,Heilmann_lieb}.

\subsection{The Three-coloring Problem}
\label{subsec:impossible6}

\textbf{Problem statement~\cite{garey1974some, garey1979computers}:} can all the faces of a planar graph $G$ be colored such that no faces sharing an edge share a color? The corresponding optimisation problem is to find such a three-coloring.

\textbf{Solution:} yes, if $G\subseteq$~AB.

\textbf{Proof:} we know that the AB tiling is a bipartite graph which means that its vertices can be partitioned into two disjoint sets such that none of the edges has vertices belonging to the same set. We associate two opposite bipartite `charges' to the vertices depending upon which of these two mutually exclusive sets they belong to.

We define two three-colored tiles, 1 and 2, as shown in Figs.~\ref{fig:3coltiles}(a) and (b), and repeat them over the entire tiling in the following manner.
\begin{itemize}
    \item Place tile 1 over any 8-vertices having the same bipartite charge. 
    \item Place the mirror image of tile 1 over any 8-vertices having the opposite bipartite charge.
    \item After that, place tile 2 or its mirror image such that the three rhombuses around every ladder, shown in Fig.~\ref{fig:3coltiles}(c), are the same color.
\end{itemize}
The consistent placement of these two tiles on the whole tiling is ensured by the structure of the tiling itself. After filling the whole AB tiling with these two tiles all that remain are ladders, as shown in Fig.~\ref{fig:3colSQ}, which can be colored consistently on the basis of colors of their surrounding rhombuses such that no two adjacent faces should have a same color. 

\begin{figure}[h]
    \centering
    \subfloat[\centering Tile 1]{{\includegraphics[width=2.2cm]{}}}
    \hspace{0.7cm}
    \subfloat[\centering Tile 2]{{\includegraphics[width=1.9cm]{}}}
    \hspace{1cm}
    \subfloat[\centering]{{\includegraphics[width=2cm]{}}}
    \caption{(a),(b) Tiles used for 3-coloring of the AB tiling. (c) The structure with three rhombuses present around every ladder.}
    \label{fig:3coltiles}
\end{figure}

The 3-coloring solution for the AB tiling is shown in Fig.~\ref{fig:threecol}. Note that the three-colorings are not unique.

\begin{figure}[h]
    \centering
    \includegraphics[width=8cm]{}
    \caption{A small patch of AB tiling 3-colored using tile 1 and tile 2 (both outlined in gold) as mentioned in Sec \ref{subsec:impossible6}. The only remaining portions of the tiling are segments of ladders which can be 3-colored consistently on the basis of their surrounding tiles.}
    \label{fig:3colSQ}
\end{figure}

\textbf{Comments:} this proof does not rely directly on the existence of Hamiltonian cycles. Rather, it relies on an intermediate step in their construction (specifically, the existence of fully packed loops on AB*, Section \ref{subsec:FPL_AB*}), as well as the discrete scale invariance of AB. This proof demonstrates that the intermediate steps in our Hamiltonian cycle construction can already unblock problems that lie in the complexity class NP in general graphs, without necessarily needing to appeal to the finished result.

As a historical note, the five-coloring theorem, which states that any political map requires at most five colors in order to avoid any neighboring countries being the same color, was proven in the 19th century~\cite{heawood1890map}. The equivalent four-coloring theorem was an infamous case of a problem which is easy to state but difficult to solve. The eventual proof in 1976 was the first major use of theorem-proving software, and no simple proof has been forthcoming~\cite{AppelHaken}. The three-coloring problem is known to be NPC~\cite{garey1979computers} on general graphs. However, special cases can again be solved. A polynomial-time algorithm for generating a three-coloring of the rhombic Penrose tiling was deduced in 2000, following its conjectured existence by Conway\cite{sibley2000rhombic}. The proof in fact covers all tilings of the plane by rhombuses, and therefore includes AB tilings. Our own solution to the three-coloring problem on AB tilings takes a different approach, and generates different three-colorings. 

\subsubsection{Application: the Potts model and Protein folding}
\label{subsubsec:3_state_potts}

The $q$-state Potts model, $q\in\mathbb{N}$, is a generalisation of the Ising model to spins $\sigma_i$ which can take one of $q$ values~\cite{wu1982potts}. It is defined by the Hamiltonian
\begin{equation}
H=-J\sum_{\langle ij\rangle}\delta_{\sigma_i,\sigma_j}
\label{eqH}
\end{equation}
where the sum is over nearest neighbors and $\delta$ is the Kronecker delta~\cite{wu1982potts}. We can define the Potts model on the AB tiling by placing one spin on the centre of each face of the tiling, with the four nearest neighbors defined to be the spins situated on the faces reached by crossing one edge. 

The Potts model has a broad range of applications in statistical physics and beyond. It shows first and second order phase transitions, and infinite order BKT transitions, under different conditions of $J$ and $q$. It is used to study the random cluster model~\cite{beffara2012self}, percolation problems~\cite{selke1983interfacial} and the Tutte and chromatic polynomials~\cite{sokal2005multivariate}. Its physical applications include quark confinement~\cite{alford2001solution}, interfaces, grain growth, and foams~\cite{selke1983interfacial}, and morphogenesis in biological systems~\cite{graner1992simulation}. While most extensively studied on periodic lattices, there is some work on the Potts model in Penrose tilings; this model appears to show the universal behaviour present in the periodic cases~\cite{wilson1988evidence, xiong1999real}. 

For $J>0$ the model has a ferromagnetic ground state with all spins aligned along one of their $q$ directions. The behaviour for $J<0$ is more complex. For $q=2$ the model is the Ising model; since a two-coloring of the faces of AB is impossible (ruled out by the existence of 3-vertices), the antiferromagnetic ground state must be geometrically frustrated, meaning the connectivity of sites causes spins to be unable to simultaneously minimise their energies. 

However, the existence of a face-three-coloring proves that for $q\geq 3$ the $J<0$ state is again unfrustrated. Three-colorings give the possible ground states of $q=3$, and a subset of ground states for $q\geq 4$.

The relationship between the Potts model and the process of protein folding has been widely studied in the field of polymer physics and computational biology \cite{levy2017potts,garel1988mean,de2017statistical}. The $q$-state Potts model has been used to model the thermodynamics of protein folding~\cite{potts1952some}. In this scenario the lattice is thought of as a solvent in which the protein exists. A protein is a chain of amino acids \cite{creighton1990protein}. When dissolved it may achieve a stable configuration, typically remarkably compact, which enables its biological functions~\cite{DillEA95}. The folding process depends on the pair-wise interactions between amino acids, set by the interaction with the solvent~\cite{garel1988mean,creighton1990protein,chan1989compact}. If the interactions between similar amino acids are attractive, they cluster together to form a compact folded structure. If these interactions are repulsive, similar amino acids try to spread out over the solvent sites, resulting in a protein which is folded, but not compactly~\cite{de2017statistical}. For modelling proteins, the Hamiltonian in Eq.~\eqref{eqH} represents interacting amino acids $\{\sigma_{i}\}$. For $J<0$ similar amino acids will have repulsive interactions in a solvent according to Eq.~\eqref{eqH}. For $J<0$ and protein chains consisting of only three different types of amino-acids (each corresponding to a color), our three-coloring of AB also predicts one of the stable, minimum energy configurations in which the protein can fold such that each of the three types of amino acid sits on the centre of a tile of corresponding color. 

It seems reasonable to suppose that folding of long protein chains might be better modelled by approximating the solvent with a quasicrystalline surface rather than a simple periodic lattice. This is because quasicrystals provide an aperiodic surface with range of bond angles, unlike regular lattices which have restricted freedom. For instance, square and hexagonal lattices have only one bond angle each ($90^\circ$ and $120^\circ$ respectively), and one vertex type, whereas AB has three bond angles of $45^\circ$, $90^\circ$, and $135^\circ$, and a wide range of different local environments branching out from seven vertex types. The 8-fold rotational symmetry of patches of AB is of a higher degree than is possible in any crystal (six being the maximum degree permitted by crystallographic restriction), additionally making AB a closer approximation to the continuous rotational symmetry of space.

%
\section{Conclusions}
\label{sec:conclusions}
%

We provided exact solutions to a range of problems on aperiodic long-range ordered Ammann-Beenker (AB) tilings, with a further three problems solved in Appendix~\ref{app:appendixB}. There are undoubtedly many more non-trivial problems like these which can be solved on AB tilings by taking advantage of its discrete scale invariance.

To our knowledge only one of the problems we solved was covered by an existing result: the three-color problem was covered by a solution to the three-color problem for Penrose tilings which applied to all tilings of rhombuses~\cite{sibley2000rhombic}. It is reasonable to ask whether this previously known result might similarly have been used to solve a range of other problems on AB, in the same way we exploited solutions to the Hamiltonian cycle problem. While we cannot rule this out, three-colorings do not obviously seem to provide an organising principle for the tiling in the way that our hierarchically constructed loops do.

Our construction relies on the fact that AB tilings are long-range ordered, allowing exact results to be proven in contrast to random structures. It is natural to wonder whether any of our results apply more generally in other quasicrystals. Deleting every second edge along a Hamiltonian cycle creates a perfect dimer matching (division of the vertices into unique neighboring pairs). The existence of such a matching on a graph is therefore a necessary condition for the existence of Hamiltonian cycles. It is not a sufficient condition, as there exist efficient methods, such as the polynomial-time Hopcroft Karp algorithm, for finding perfect matchings~\cite{hopcroft1973n}. Seeking quasicrystals which admit perfect matchings might be a good starting point for finding other quasicrystals which admit Hamiltonian cycles. As a preliminary check, using the Hopcroft Karp algorithm we found the minimum densities of unmatched vertices on large finite patches of quasiperiodic tilings generated using the de Bruijn grid method, with rotational symmetries from 7-fold up to 20-fold~\cite{deBruijn81A}. We found finite densities in each case, suggesting AB may be unique among the de Bruijn tilings in admitting perfect matchings or Hamiltonian cycles\footnote{We used Josh Colclough's code for generating arbitrary de Bruijn grids, available at github.com/joshcol9232/tiling.}. On the other hand, a preliminary check of the recently discovered `Hat'~\cite{smith2023aperiodic} and `Spectre'~\cite{smith2023chiral} aperiodic monotiles (single tiles which tile only aperiodically) suggest that these also admit perfect dimer matchings, even if vertices are added so as to make the graphs bipartite; however, the structure of these matchings seems to indicate that a continuation to fully packed loops is not possible. 

\begin{figure}[h]
    \centering
    \includegraphics[width=\columnwidth]{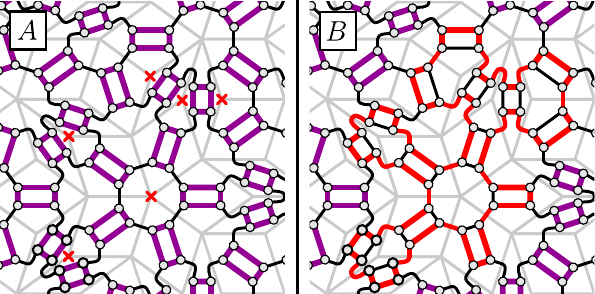}
    \caption{$A$: a Penrose tiling (grey lines) decorated by graph edges (black) and vertices (grey disks) so as to admit fully packed loops (purple) by construction. $B$: augmenting the alternating cycle surrounding each red cross leads to a cycle (red) which visits every graph vertex on any Penrose tile touching a cross.}
    \label{fig:Penrose}
\end{figure}

All of these examples consider the simplest case in which the vertices and edges of the graph are those of the corresponding tiling. Fig.~\ref{fig:Penrose}A shows a Penrose tiling (grey lines) decorated with graph vertices (grey disks) and edges (black) such that the graph admits fully packed loops (purple) by construction. Identical Penrose rhombuses are decorated identically, so that this graph has the full symmetry of the tiling. In Fig.~\ref{fig:Penrose}B we have augmented the alternating cycle surrounding any red cross. The result is a cycle (red) visiting every vertex on any Penrose tile adjoining a cross. Hence, this is a Hamiltonian cycle on an appropriate subgraph. The construction (based on a dimer construction in Ref.~\cite{flicker2020classical}) closely mirrors that for AB tilings: establishing fully packed loops (in this case by construction), before connecting the loops together. Using similar approaches we might reasonably expect our results to generalise to a much broader class of structures.

Our Hamiltonian cycle construction included a proof of the existence of fully packed loops on arbitrarily large finite subgraphs of AB, in which every vertex is visited by a loop, but these need no longer be the same loop~\cite{gier2009fully}. The FPL model is important for understanding the ground states of frustrated magnetic materials such as the spin ices~\cite{jaubert2011analysis}. Famously, loops played a key role in Onsager's exact solution to the Ising model in the square lattice, one of the most important results in statistical physics~\cite{onsager1944crystal, feynman2018statistical}. If all loops could be enumerated systematically, this would suggest the Ising model with spins defined on the faces of AB might be exactly solvable. We note that all possible dimer matchings were exactly enumerated on a modification of AB in which all the 8-vertices were removed, giving the partition function to the dimer problem~\cite{lloyd2021statistical}. The possible loop configurations can similarly be enumerated in that context.

There is a deep connection between loop and dimer models, and quantum electronic tight-binding models defined on the same graphs. For example, Lieb's theorem states that if a bipartite graph has an excess of vertices belonging to one bipartite sublattice, the corresponding tight-binding model must have at least the same number of localised electronic zero modes making up a zero-energy flat band~\cite{Lieb89}. These zero modes are topologically protected in the sense that they survive changes to the hopping strengths provided these strengths remain non-zero. Even if a bipartite graph has an overall balance of sublattice vertices, it can have locally defective regions in which zero modes are similarly localised~\cite{BholaEA22}. This occurs for example in the rhombic P3 Penrose tiling, which has an overall balance between biparite sublattices, but local imbalances lead to a finite density of localised electronic zero modes in its tight-binding and Hubbard models~\cite{KohmotoEA86,koga2017antiferromagnetic,DR20}, which can be exactly calculated using dimers~\cite{flicker2020classical}. Even if the bipartite sublattices balance \emph{locally}, as in the AB tiling~\cite{lloyd2021statistical}, it could still in principle be the case that the graph connectivity localises electrons to certain regions. Recent work has identified such zero modes in the Hubbard model on AB~\cite{Koga20}. Many, but not all, of these appear to fit a loop-based structure first identified in Penrose tilings~\cite{KohmotoEA86} and subsequently more general graphs~\cite{Sutherland86}. The same structures have recently been identified in the Hat aperiodic monotiles~\cite{schirmann2023physical}. A deeper connection between the graph connectivities leading to electronic zero modes, and the loops we identify here, is an interesting avenue for further work.

Hamiltonian cycles and other loop models have been widely used for modelling protein folding, by approximating the molecules as living on the sites of a periodic lattice \cite{bodroza2013enumeration, de2017statistical,chan1989compact}. This simplification has allowed the tools of statistical physics to be brought to bear on complex problems~\cite{chan1989intrachain,chan1989compact,lau1989lattice}. Quasicrystals may well serve as better models for such studies in which space is approximated with a discrete structure. Quasicrystals are able to feature higher degrees of rotational symmetry than crystals, providing a better approximation to the continuous rotational symmetry of real space~\cite{Senechal}. They feature a wider range of nearest neighbor bond angles, and the variety of structures at greater numbers of neighbors increases rapidly. Quasicrystals also lack discrete translational symmetry, a feature of lattices which could cause them to poorly approximate continuous space, although quasicrystals such as AB do feature a discrete scale invariance which might lead to similar shortcomings. Random graphs also have these benefits, but they lack the long-range order of quasicrystals required to obtain exact results such as those we have presented here.

In the context of NPC problems, exact solutions to Hamiltonian cycles are useful because they serve as bench-marking measures for new algorithms \cite{baniasadi2014deterministic, baniasadi2018new, sleegers2022hardest}. While the difficulty of NPC means no known algorithm is capable of deterministically solving all instances in polynomial time, several algorithms display a high success rate on many graphs \cite{helsgaun2000effective, baniasadi2014deterministic, applegate2011traveling}. Finding and classifying Hamiltonian cycle instances which prove `hard' for these algorithms has immediate practical applications and can provide key insight into the full NPC problem. A future goal would be to see how these modern Hamiltonian cycle algorithms perform when applied to AB regions admitting Hamiltonian cycles. Since AB sits at the borderline between regular and random graphs, it contains structures not present in either. Additionally, our Hamiltonian cycle construction is highly nonlocal, leveraging our understanding of the discrete scale symmetry of the tiling. Thus, it may prove a difficult test-case: in that scenario, understanding which structures of AB prove hard for solvers would be a worthwhile challenge. We provided our $U_1$ and $U_2$ graphs to the developers of the state-of-the-art `snakes and ladders heuristic' (SLH) Hamiltonian cycle algorithm \cite{baniasadi2014deterministic}, who ran a preliminary test in this direction\footnote{The SLH algorithm was run on an \emph{Intel Core i7-1165G7} machine with \emph{4} cores, \emph{2.80 GHz} clockspeed, and \emph{32 GB} RAM. We thank Michael Haythorpe~\cite{baniasadi2014deterministic} for running the SLH Hamiltonian cycle algorithm.}. $U_1$ contains 464 vertices and $U_2$ contains 14992 vertices. The SLH algorithm found a Hamiltonian cycle on $U_1$ in modest time ($\sim 1s$) but failed to find cycles on $U_2$ before exceeding the allotted memory. Given the large size of $U_2$ however, this is not conclusive evidence that AB is hard to solve. A next step would be a more thorough analysis, possibly identifying Hamiltonian cycle regions of more reasonable size for bench-marking purposes.

Another interesting direction might be to consider directed or weighted extensions of the graphs. Directed graphs can model non-Hermitian systems with gain and loss~\cite{metz2019spectral}; weighted graphs could model, for example, the travelling salesperson problem in which cities are connected by different distances. Finding a shortest route through a weighted graph is still an NPC problem, and the route would be selected from among the unweighted Hamiltonian cycles. 

Our solution to the longest path problem shows that sets of flexible molecules of arbitrary lengths (not necessarily the same) can pack perfectly onto AB quasicrystals if they register to the atomic positions. This suggests applications to adsorption, which is relevant for instance to carbon capture~\cite{Webley20} and catalysis: two molecules might find a lower-energy route to reacting if they first bond to an appropriate surface before being released after reaction. Physical quasicrystals with three-dimensional point groups -- `icosahedral' quasicrystals (iQCs) -- are known to have properties which make them potentially advantageous as adsorbing substrates. Being brittle, they are readily broken into particles tens of nanometres in diameter, maximising their surface area to volume ratio (a key consideration in industrial catalysis). They are stable at high temperatures, facilitating increased rates of reactions~\cite{LL08}. And, as discussed above, their surface atoms, lacking periodic arrangements, permit a wider range of bond angles. On the other hand, iQCs have been found to have high surface reactivities, which is unfavorable for catalysis because adsorbing molecules will tend to bond to the iQC substrate rather than to one another~\cite{Papadopolos02,Papadopolos04}. They also have low densities of states at the Fermi level compared to metals, which is unfavorable~\cite{FlickerVanWezel15B}. However, iQC nanoparticles have been successfully coated with atomic monolayers which then adopt their quasicrystallinity~\cite{B1,B2}; these monolayers include silver, a widely used and efficient catalyst with none of the quasicrystals' shortcomings~\cite{Pagliaro20}. Investigating the use of quasicrystals for such applications would therefore seem a profitable direction for future experiments. 

The surfaces of iQCs have not been seen to feature the symmetries of AB; rather, they form a class of structures which includes Penrose tilings~\cite{Papadopolos02,Papadopolos04}. Further work is therefore needed to establish the possible packing densities of realistic iQC surfaces. Alternatively, the results presented here could potentially be put to use in the known two-dimensional `axial' quasicrystals which feature the symmetries of AB~\cite{wang1987two,Wang88,Rabson91}.

While AB tilings might have appeared to be an unlikely place to seek exact results to seemingly intractable problems, owing to their lack of periodicity, their long-range order makes such results obtainable. While periodic lattices feature a discrete translational symmetry which simplifies many problems (essentially enabling fields such as condensed matter physics, floquet theory, and lattice QCD), quasicrystal lattices such as AB feature instead a discrete scale invariance. While less familiar, we would argue that this is just as powerful a simplifying factor. For example, the study of the electronic theory of solids is made possible by Bloch's theorem, which allows the electronic wavefunction to be calculated in infinite periodic lattices. An equivalent to Bloch's theorem can be formulated for systems with discrete scale invariance (\emph{e.g.}~Eq.~1.4 in Ref.~\onlinecite{Ovdad21}), suggesting a comparable simplification in seemingly complex problems. To take another example, conformal theories, describing for example critical states at phase transitions, feature both continuous scale and continuous translation invariance. Our work opens the possibility of studying the breakdown of conformal symmetry into discrete scale invariance rather than the more familiar discrete translational invariance~\cite{Gokmen23,Biswas23}. For example, we have established the possibility of studying fully packed loops on AB tilings; in periodic settings these admit conformal field theory descriptions~\cite{kondev1997liouville}. The results presented here provide a key example of the simplifying power of discrete scale invariance.

Broadly, we might expect that many problems for which crystals are special cases might now find quasicrystals to also be special cases, albeit of a fundamentally different form.

%
\section*{Acknowledgments}
%

We thank J.~C.~S\'{e}amus Davis for providing estimates for scanning tunneling microscopy times, Francesco Zamponi for suggestions on Hamiltonian cycle bench-marking, Michael Haythorpe for running the SLH Hamiltonian cycle finder algorithm on AB regions, Irene Giardina and Henna Koivusalo for helpful discussions, Michael Baake, Zohar Ringel, Steven H.~Simon, Michael Sonner and Jasper van Wezel for helpful comments on the manuscript, and Josh Colclough for the use of his code for generating de Bruijn grids. F.~F.~acknowledges support from EPSRC grant EP/X012239/1.

\bibliography{references}

%
\appendix
\counterwithin{figure}{section}
\section{Detailed Figures}
\label{app:appendixA}

In this appendix we include detailed figures to accompany the proof sections in the main text. Fig.~\ref{fig:AP1} and Fig.~\ref{fig:AP2} show intermediate steps for obtaining the Hamiltonian cycle in Fig.~\ref{fig:giant_H_cycle}. In Fig.~\ref{fig:AP1}, black $e_0$ edges form loops visiting every $8_{n<0}$-vertex (i.e.~every vertex which is not an $8$-vertex). We placed red $e_1$ edges along the $e_1$-loops (red loops in Fig.~\ref{fig:thermodynamic_limit}) to add $8_0$-vertices onto the loops. After this we also placed blue $e_2$ edges along the $e_2$-loop (blue loop in Fig.~\ref{fig:thermodynamic_limit}), with one corner folded, to add the central vertex and $8_1$-vertices onto the loops. Note that some parts of the blue $e_2$ edges overlaps with the red $e_1$ edges. Noting that augmentation of a set of edges is reversed by a second augmentation of the same set of edges, in Fig.~\ref{fig:AP2} these overlapping $e_2$ edges and $e_1$ edges will augment to leave the $L_0$ tiling unchanged. Hence we can remove these overlapping edges completely. This results in a rewiring of $e_1$-loops and the $e_2$-loop into a single loop as shown. Augmenting this single loop, consisting of the remaining red and blue edges, will give us the Hamiltonian cycle shown in Fig.~\ref{fig:giant_H_cycle}. 

Fig.~\ref{fig:threecol} is a bigger version of Fig.~\ref{fig:3colSQ}, showing a solution for the 3-coloring problem on a large patch of AB tiling resulting by placing tile 1 and tile 2 shown in Fig.~\ref{fig:3coltiles}, and coloring the remaining ladders consistently on the basis of colors of their surrounding rhombuses such that no two adjacent faces share a color.

%
\onecolumngrid

\begin{figure}[H]
    \centering
    \includegraphics[width=18cm]{./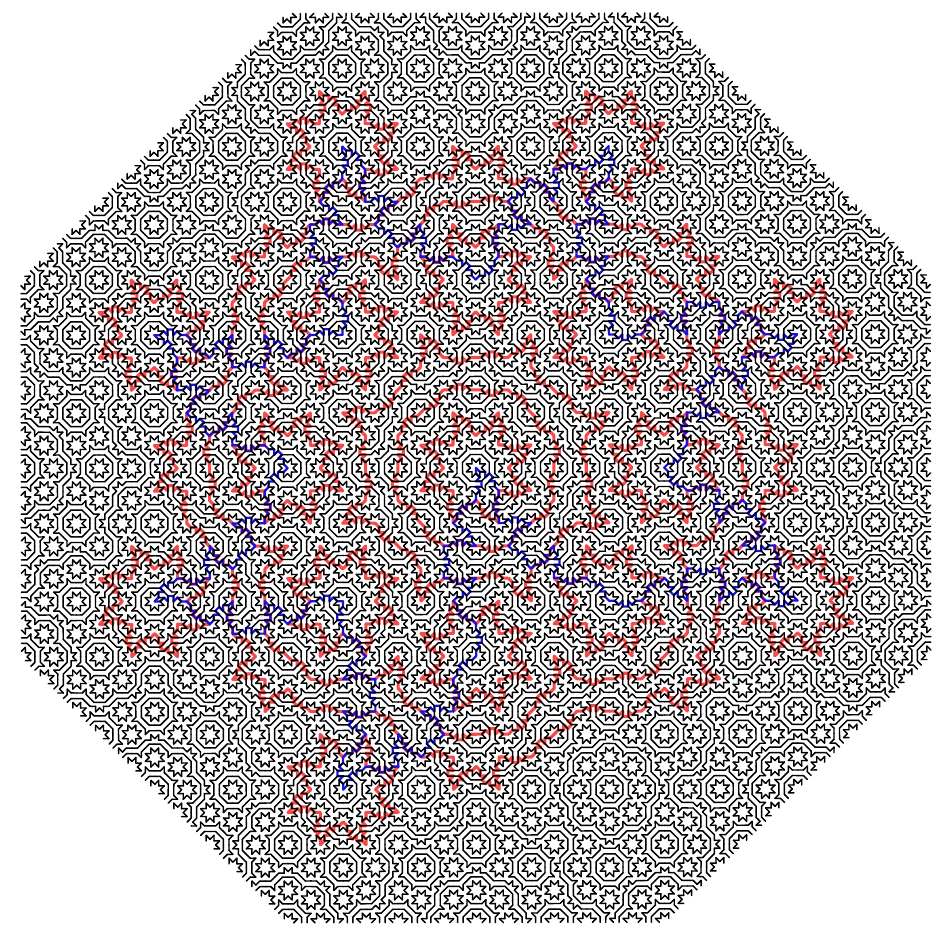}
    \caption{The canonical placement of $e_0$ edges (black) form loops visiting every $8_{n<0}$-vertex (i.e.~every vertex which is not an $8$-vertex). Now $e_1$ edges (red) are placed along the $e_1$-loops (see Fig. \ref{fig:thermodynamic_limit}) to add $8_0$-vertices onto the loops. Further $e_2$ edges (blue) are placed along the $e_2$-loop, with one corner folded, to add the central vertex and $8_1$-vertices onto the loops. Note that some  blue $e_2$ edges overlap with the red $e_1$ edges. This process can be iterated to any order.}
    \label{fig:AP1}
\end{figure}

\begin{figure}[H]
    \centering
    \includegraphics[width=18cm]{./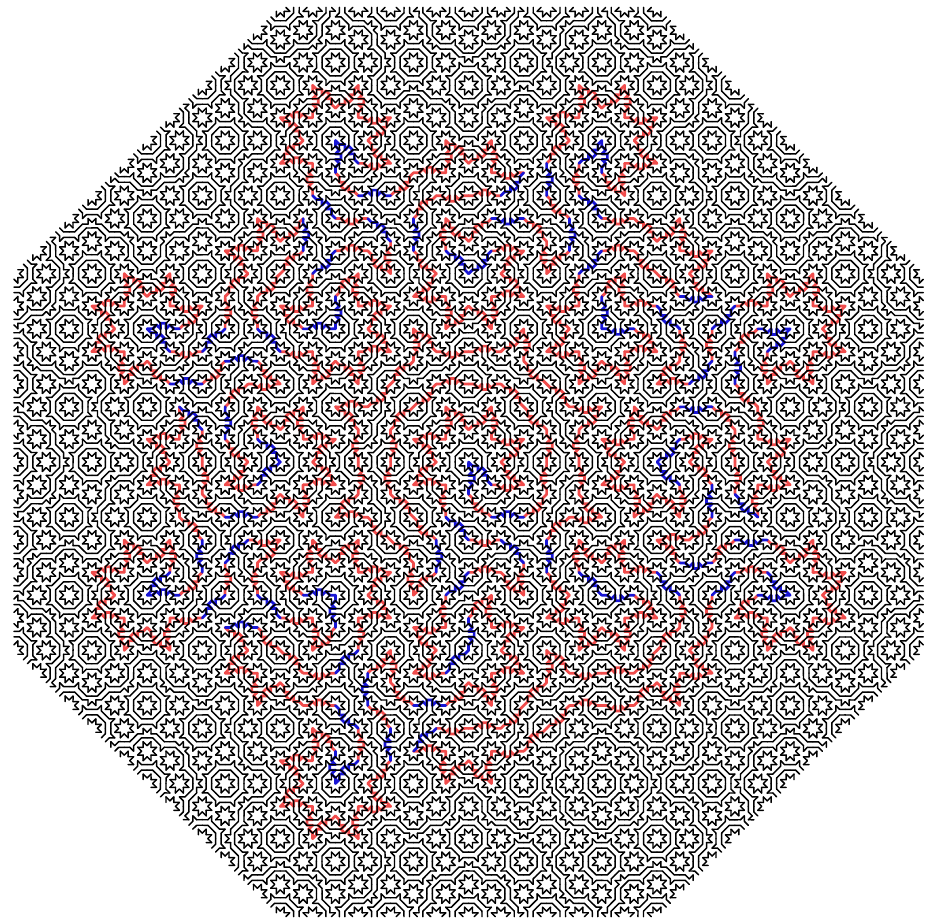}
    \caption{The overlapping $e_2$ edges and $e_1$ edges as shown in Fig~\ref{fig:AP1} will augment to leave the $L_0$ tiling unchanged. Hence we can remove these overlapping $e_2$ edges and $e_1$ edges completely. This results in the rewiring of $e_1$-loops and the $e_2$-loop into a single loop as shown here. Finally augmenting this single loop, consisting of the remaining red and blue edges, will gives the Hamiltonian cycle shown in Fig. \ref{fig:giant_H_cycle}.}
    \label{fig:AP2}
\end{figure}

\begin{figure}[H]
    \centering
    \includegraphics[width=18cm]{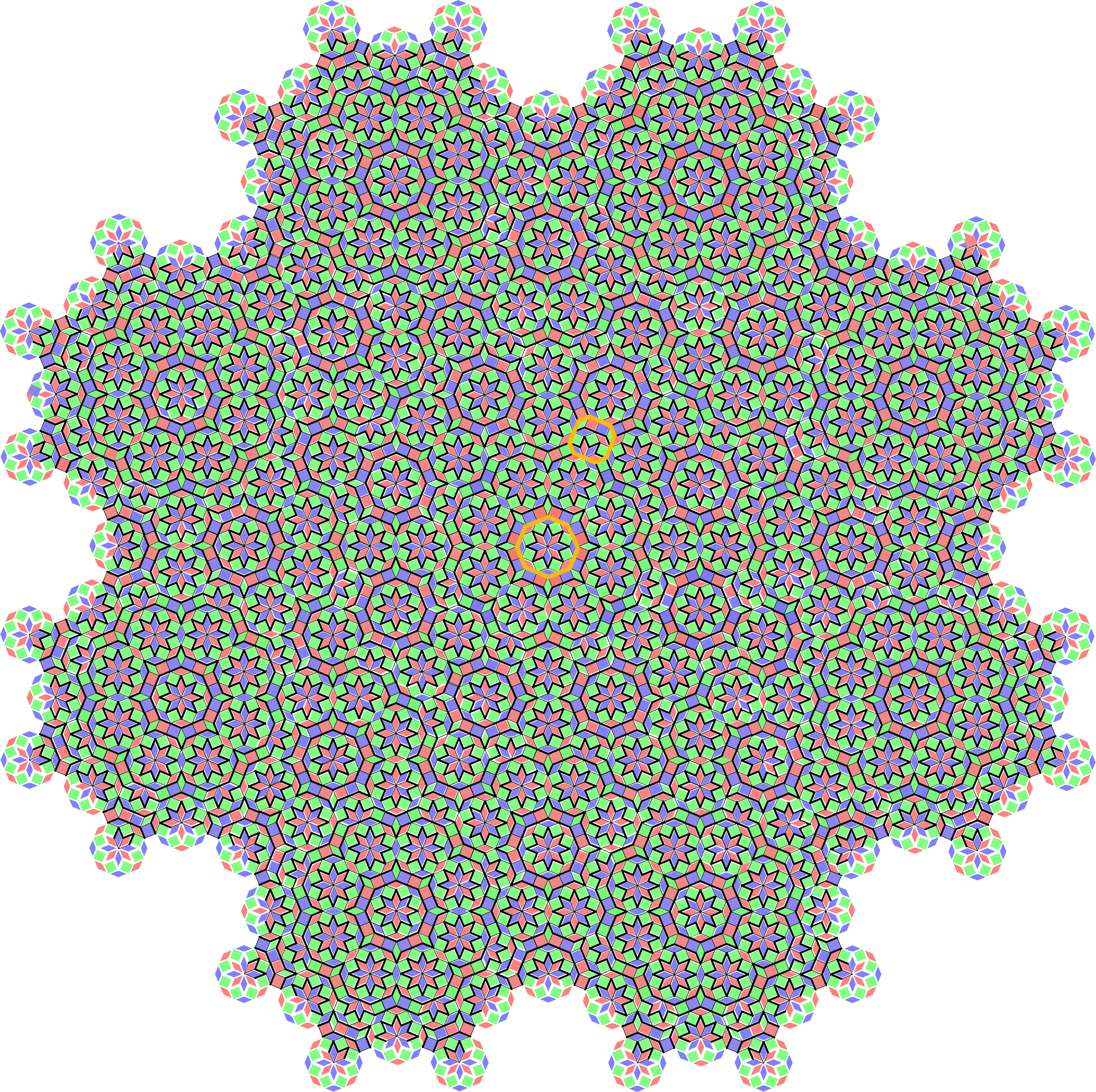}
    \caption{The 3-colored AB tiling using tile 1 and tile 2 (both outlined in gold) as mentioned in Sec. \ref{subsec:impossible6}.}
    \label{fig:threecol}
\end{figure}

\counterwithin{figure}{section}
\section{Solutions to other non-trivial problems on AB}
\label{app:appendixB}
\subsection{The Minimum Dominating Set Problem}
\label{subsec:impossible3}
A dominating set $D$ of a graph $G$ is a subset of the vertices of $G$ such that all the vertices not in $D$ are adjacent to at least one vertex in $D$. The minimum dominating set (MDS) is a dominating set containing the fewest possible vertices; its cardinality is called the domination number, $\gamma (G)$. 

\textbf{Problem statement~\cite{garey1979computers}:} given a graph $G$ and integer $k$, is the domination number $\gamma\left(G\right)\leq k$? The corresponding optimisation problem is to find the MDS.

\textbf{Solution:} if $G$ is built from $N$ copies of $L_{1/2}$ tiles, yes iff $N\leq k$.

\textbf{Proof:} Fig.~\ref{fig:inflation1} shows the single-inflation rules for the rhombus and square, level $L_{1/2}$. The set of red vertices forms an MDS for all vertices within each $L_{1/2}$ tile. Since these inflated tiles cover the entire tiling (owing to discrete scale symmetry), the union of minimum dominating sets of all these inflated tiles must be a dominating set for any patch of the AB tiling, or the tiling itself in the thermodynamic limit. That this is actually an MDS can be seen from the fact that there is no redundancy in the placement of the red vertices in Fig.~\ref{fig:inflation1}: no red vertex has another red as a neighbor, which implies that the set constitutes an MDS for the whole AB tiling. A larger region of the MDS is shown in Fig.~\ref{fig:mds}. Note that the MDS is not unique. 

{\bf Comment}: the proof relies only on the discrete scale symmetry of AB tilings, without reference to the Hamiltonian cycle construction. 

\begin{figure}[h]
\includegraphics[width=8.5cm]{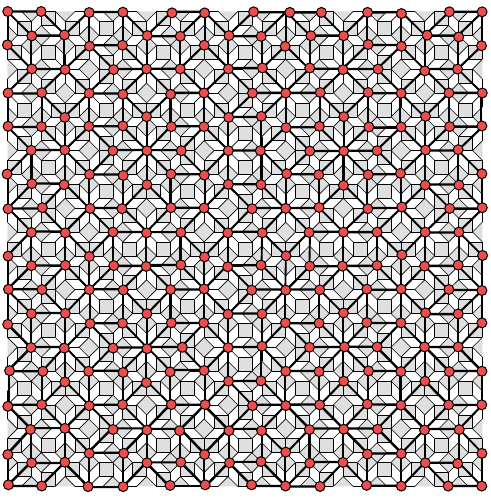}
\caption{A portion of AB tiling showing the minimum dominating set vertices (red), which as proved in Section \ref{subsec:impossible3} are also vertices of single inflated tiles.}
\label{fig:mds}
\end{figure}

\subsection{The Domatic Number Problem}
\label{subsec:impossible4}
The domatic number $d(G)$ for any given graph $G$ with a set of vertices $V$ is the maximum number of disjoint dominating sets into which $V$ can be partitioned.

\textbf{Problem statement~\cite{cockayne1975optimal, garey1979computers}:} for a given graph $G$ and integer $3\leq k \leq \delta\left(G\right)+1$, where $\delta\left(G\right)$ is the minimum degree of $G$, does $G$ have a domatic number $d\left(G\right)\geq k$? The corresponding optimisation problem is to find the corresponding dominating sets.

\textbf{Solution:} if $G=U_{n>0}$, yes if $k = 3$.

\textit{Proof:} we will prove the result by using our Hamiltonian cycles to explicitly construct a partition of $U_{n>0}$ into three disjoint dominating sets.

\begin{figure}[h]
\includegraphics[width=9cm]{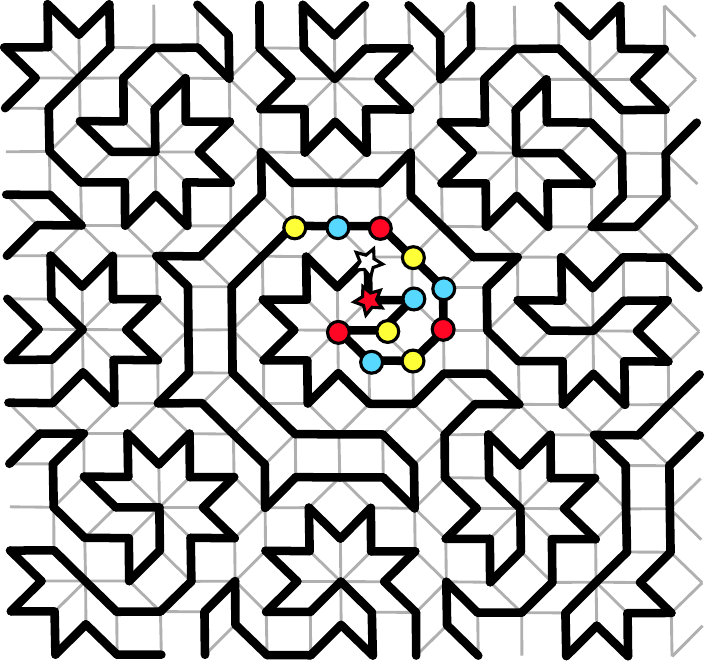}
\caption{Hamiltonian cycle structure (black edges), with vertices colored cyclically by the three colors $R$, $B$, $Y$, starting from the red star and proceeding clockwise around the cycle. The coloring scheme ensures that every vertex of a given color is neighbor to at least one vertex of each of the other colors; the final vertex in the cycle to be colored (white star), neighbors at least one vertex of each color, by construction. This proves that Hamiltonian subgraphs of AB have a domatic number $d\geq 3$.}
\label{fig:d3}
\end{figure}

To see that the domatic number cannot be greater than $\delta+1$, consider the set $N(v)$ composed of a vertex $v$ along with its $\delta$ neighbors, such that $|N(v)| = \delta +1$; then each dominating set $D_i$ contains at least one vertex in $N(v)$, and each vertex in $N(v)$ is contained in at most one $D_i$. Thus $d \leq \delta +1$. The case $k=2$ also trivially holds for any graph with no isolated vertices. Hence, finding $k=3$ disjoint dominating sets is the first non-trivial instance of the problem, which is NPC on general graphs. 

Before proving our statement for AB, we first anticipate the result by giving a simple proof that any Hamiltonian graph $G$ of size $|G| = 3n$, for $n$ integer, admits $k=3$ disjoint dominating sets. We have not seen this statement made elsewhere. To see that it is true, we begin by labelling the three sets by the colors red, blue, and yellow, denoted $R,\ B,\ Y$. Starting from an arbitrary vertex $v_1$ of $G$, color $v_1$ red. Then, proceeding along the Hamiltonian cycle, color each vertex cyclically according to the three colors. Thus, $v_2$ is colored $B$, $v_3$ is colored $Y$, $v_4$ is colored $R$, and so on. The constraint $|V| = 3n$ ensures that the final vertex, $v_{3n}$, neighboring $v_0$, is colored $Y$. In this way the cyclic structure of the coloring holds. Since (\emph{i}) the Hamiltonian cycle visits every vertex and (\emph{ii}) every vertex of a given color neighbors at least one vertex of each of the two remaining colors, we have constructed three disjoint dominating sets for $k=3$. 

The extension to the $G= U_{n>0}$ regions of AB, which we have proven admit Hamiltonian cycles, is similarly simple. The cyclic coloring along the Hamiltonian cycle proceeds exactly as above. However, we do not have $|U_n| = 3n$ (for example, $|U_1|=464|$). Nevertheless, in Fig.~\ref{fig:d3} we show a suitable starting vertex of the Hamiltonian cycle (red star) such that the final vertex of the cycle (white star) neighbors one vertex of each color. The pictured section of the Hamiltonian cycle appears as the central region of the $U_1$ cycle shown in Fig.~\ref{fig:AB_H_cycle}, and every Hamiltonian cycle constructed on $U_{n>0}$ according to our method follows an identical path at its centre. Therefore, every vertex of a given color neighbors at least one vertex of each of the two remaining colors, and we have proven the case $k=3$ for $U_{n>0}$. Note that $U_0$ is too small for this method to hold, and has $d=2$. 

\textbf{Comments:} given that the minimum connectivity of AB is $\delta=3$, the problem statement requests proof that the domatic number of $U_{n>0}$ is either $d=3$ or $d=4$. We have constructed a $d=3$ partitioning using our Hamiltonian cycle construction. This solves the domatic number problem, which is NPC on general graphs. 

The question of whether there exists a $d=4$ partition remains open. We have been able to find such partitions of large regions of AB. We have not found any noticeable decrease in the freedom to continue the patterns as they grow. On the basis of these observations, we conjecture that a $d=4$ partition is possible for arbitrarily large $W_n$ regions. If true, this would render these arbitrarily large finite patches of AB tilings rare examples of `domatically full' graphs~\cite{cockayne1977towards} with $d(G)=\delta(G)+1$.

\subsection{The Induced Path Problem}
\label{subsec:impossible5}

An induced path in an undirected graph $G$ is a sequence of vertices such that a pair of vertices is adjacent in the sequence iff the vertices have an edge in $G$. An induced cycle is an induced path which closes.

\textbf{Problem statement~\cite{Yannakakis,garey1979computers}:} for a graph $G$ and a positive integer $k$, does $G$ contain an induced cycle of length at least $k$? The corresponding optimisation problem is to find an induced cycle of length $k$.

\textbf{Solution:} yes, if $G=$~AB.

\textbf{Proof:} in this case it is convenient to consider the $D_8$-symmetric regions $W_n$. Fig.~\ref{fig:IP} shows the loops of the FPL on AB* constructed in Sec.~\ref{subsec:FPL_AB*}. We see that any non adjacent vertices on these loops do not have an edge in $G$, while any adjacent vertices do. Hence these loops are induced cycles. The longest such loop in $W_n$ is of length $8\left(9^n+1\right)$ (this can be proven using the inflation rules of Fig.~\ref{fig:inflation1}). Since any AB tiling contains $W_\infty$, the answer to the problem is `yes' for any $k$, finite or infinite. 

\textbf{Comments:} on general graphs the problem remains NPC if an induced path is sought rather than an induced cycle. That problem can also be solved on AB, by deleting any pair of consecutive edges along the cycle. Longer induced cycles than those we consider can be found in $W_n$, but the problem statement does not require them. 

\begin{figure}[h]
\includegraphics[width=9cm]{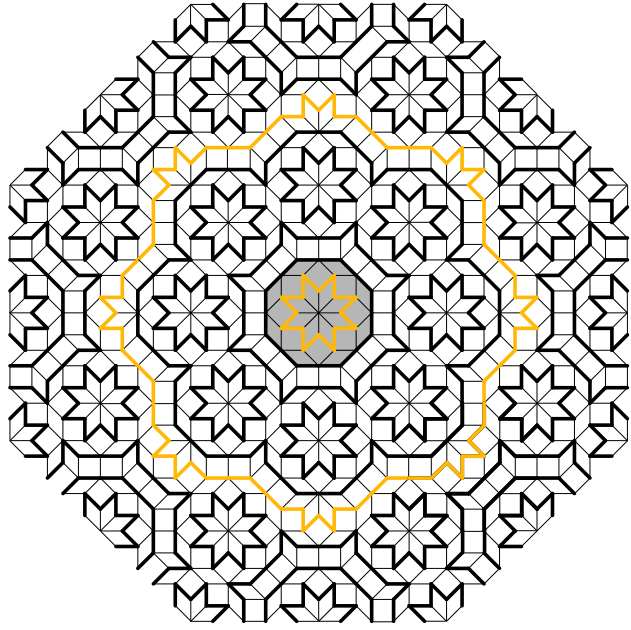}
\caption{FPLs on AB* formed by the method in Section~\ref{subsec:FPL_AB*} form induced paths on AB. The longest induced path in region $W_n$ created by this method is of length $8\left(9^n+1\right)$. The total region shown here is $W_1$, with $W_0$ highlighted in grey. The corresponding induced paths are shown in gold.}
\label{fig:IP}
\end{figure}

\end{document}